\newif\ifFULL
  \providecommand{\claimname}{Claim}
  \providecommand{\definitionname}{Definition}
  \providecommand{\lemmaname}{Lemma}
\providecommand{\corollaryname}{Corollary}
\providecommand{\theoremname}{Theorem}
\DeclareMathOperator*{\argmax}{arg\,max}
\providecommand{\tabularnewline}{\\}
\DeclareMathOperator{\Ima}{Im}
\newcommand{\e}{\mathrm{e}}
\begin{document}


\newcommand{\f}{\rho}

\newcommand{\hF}{\hat{F}}

\renewcommand{\K}{M}
\renewcommand{\v}{\varepsilon}
\renewcommand{\u}{m}
\renewcommand{\p}{\boldsymbol {\pi}}

\newcommand{\w}[1]{\mathbf{#1}}
\renewcommand{\v}[1]{\mathbf{#1}^*}
\newcommand{\n}[1]{#1_1,\ldots, #1_n}

\newcommand{\chgdel}[1]{\textcolor{red}{\sout{#1}}}
\newcommand{\chgins}[1]{\textcolor{blue}{#1}}

\newcommand{\msout}[1]{\text{\sout{\ensuremath{#1}}}}
\newcommand{\chgdelm}[1]{\textcolor{red}{\msout{#1}}}
\newcommand{\chgdels}[1]{\textcolor{red}{\cancel{#1}}}

\markboth{Chen, Li, Lin, and Rubinstein}{Combining Traditional Marketing and Viral Marketing with Amphibious Influence Maximization}

\title{Combining Traditional Marketing and Viral Marketing with \\ Amphibious Influence Maximization\ifFULL $^*$\fi}
\author{
WEI CHEN
\affil{Microsoft Research, \texttt{weic@microsoft.com}}
FU LI$^1$
\affil{Tsinghua University, \texttt{fuli.theory.research@gmail.com}}
TIAN LIN$^2$
\affil{Tsinghua University, \texttt{lint10@mails.tsinghua.edu.cn}}
AVIAD RUBINSTEIN$^3$
\affil{University of California at Berkeley, \texttt{aviad@eecs.berkeley.edu}}
}

\begin{abstract}
In this paper, we propose the {\em amphibious influence maximization (AIM)} model that combines traditional marketing
	via content providers and viral marketing to consumers in social networks in a single framework.
In AIM, a set of content providers and consumers form a bipartite network while consumers also form their social network, and
	influence propagates from the content providers to consumers and among consumers in the social network following the independent cascade model.
An advertiser needs to select a subset of seed content providers and a subset of seed consumers, such that
	the influence from the seed providers passing through the seed consumers could reach a large number of consumers
	in the social network in expectation.

We prove that the AIM problem is \NP-hard to approximate to within any constant factor via a reduction from
	Feige's $k$-prover proof system for 3-SAT5.
We also give evidence that even when the social network graph is trivial (i.e. has no edges),
	a polynomial time constant factor approximation for AIM is unlikely.
However, when we assume that the weighted bi-adjacency matrix that describes the influence of content providers on consumers is of constant rank,
	a common assumption often used in recommender systems, we provide a polynomial-time algorithm
	that achieves approximation ratio of $(1-1/\e-\varepsilon)^3$ for any (polynomially small) $\varepsilon > 0$.
Our algorithmic results still hold for a more general model where cascades in social network follow a general monotone and submodular function.

\end{abstract}

\category{G.2}{Mathematics of Computing}{Discrete Mathematics}\category{G.3}{Mathematics of Computing}{Probability and Statistics}\category{F.2.0}{Analysis of Algorithms and Problem Complexity}{General}\category{J.4}{Social and Behavioral Sciences}{Economics}

\terms{Economics, Theory}

\keywords{viral marketing, influence maximization, amphibious influence maximization, hardness of approximation, social networks, influence diffusion}


\begin{bottomstuff}
\ifFULL
$^*$ This is the full version of the paper appeared in ACM EC'2015.
\fi
	
$^1$  This work was mostly done while visitng Microsoft Research Asia. 
This research was supported in part by National Basic Research Program of China Grant 2011CBA00300, 2011CBA00301, and by National Natural Science Foundation of China Grant 61033001, 61361136003.

$^2$ This work was mostly done while interning at Microsoft Research Asia.

$^3$ This work was mostly done while interning at Microsoft Research Asia, and partly at the Simons Institute for the Theory of Computing.
This research was supported by NSF grants CCF0964033 and CCF1408635, and by Templeton Foundation grant 3966. 

%
\end{bottomstuff}

\maketitle

\newcommand{\ignore}[1]{}

\newcommand{\compilehidecomments}{false}

\ifthenelse{ \equal{\compilehidecomments}{true} }{%
	\newcommand{\wei}[1]{}
	\newcommand{\aviad}[1]{}
	\newcommand{\tian}[1]{}
	\newcommand{\fu}[1]{}
}{
	\newcommand{\wei}[1]{{\color{blue}  [\text{Wei:} #1]}}
	\newcommand{\aviad}[1]{{\color{brown} [\text{Aviad:} #1]}}
	\newcommand{\tian}[1]{{\color{gray}  [\text{Tian:} #1]}}
	\newcommand{\fu}[1]{{\color{purple} [\text{Fu:} #1]}}
}

  \newtheorem{defn}{\protect\definitionname}
\newtheorem{thm}{\protect\theoremname}
  \newtheorem{lem}{\protect\lemmaname}
  \newtheorem{cor}{\protect\corollaryname}
  \newtheorem{claim}{\protect\claimname}

\renewcommand{\H}{H}
\newcommand{\Mj}{M_{\cdot j}}
\renewcommand{\S}{\ensuremath{{\cal S}_\varepsilon}}
\renewcommand{\R}{R}
\renewcommand{\b}{\lambda}

\newcommand{\U}{\rho}
\renewcommand{\E}[1]{\mathrm{E}\left[#1\right]}


\section{Introduction}

Marketing is traditionally partitioned into several stages: advertisers pay content providers (e.g. TV networks, radio stations, online news sites,  influential bloggers, etc.); content providers recruit audience; and then the audience who are exposed to the advertisements influence their friends. 
Today, with the development of the Internet and social networks, 
	there is an enormous amount of data that can be used to
	predict which users will enjoy a specific content,
	which users are likely to purchase the advertised product, and which users can influence their friends to buy the product as well. 
More importantly, information is available to track the individuals 
	who participate in each one of those interactions. 
This suggests a new marketing approach in which advertisers can contact both
	content providers and the audience at the same time, with the goal of
	maximizing the overall exposure 
	(through direct exposure as well as propagation via social networks) 
	to the advertisement.

Consider the following example.
Suppose a technology company wants to select a subset of regular tech bloggers 
	(content providers) 
	and engage them with marketing activities so that they would cover
	the company extensively and favorably. 
However, this alone does not guarantee that these favorable blogs can reach
	the targeted customers of the company.
The company may further select a number of non-bloggers and spend its marketing
	effort on them (e.g. buying advertising slots to remind them about
	the blog entries of their selected bloggers) to make them 
	active in subscribing, reading, and propagating the blog entries 
	written by the company's selected bloggers.
The objective of the company is to maximize
	the number of targeted customers who get exposed to the favorable blogs, either directly
	or indirectly through links forwarded by friends in the social network.

The above proposed marketing strategy can be viewed as a combination of traditional marketing via
	content providers and viral marketing in social networks.
It can be modeled as a controlled diffusion
	in a joint network consisting of a bipartite graph modeling provider-consumer
	relationship and a social graph modeling social influence relationship among
	the consumers.
The bipartite graph and its edge weights indicate the influence from content
	providers to consumers, while the social graph and its edge weights indicate
	the influence among consumers.
An advertiser wants to select a subset of content providers (called {\em seed providers})
	and a subset of consumers (called {\em seed consumers})
	in the social network such that the influence from
	seed providers could activate enough seed consumers, which in turn
	could activate more consumers in the social network.
Since the two marketing activities involve costs of different types, 
	we enforce separate budgets on provider selection and consumer selection.

In this paper, we model the above combined marketing strategy as the following
	{\em amphibious influence maximization (AIM)} problem.
We are given (a) a bipartite graph $B=(U,V,M)$ where $U$ represents content providers,
	$V$ represents consumers, and $M$ is the weighted bi-adjacency matrix representing
	the influence probabilities from providers to consumers;
	and (b) a directed social graph $G=(V,P)$ where $V$ is the same set of consumers as in $B$ and
	$P$ is the weighted adjacency matrix representing influence probabilities of each consumer over her friends.
Given a subset $X\subseteq U$ of seed providers and a subset $Y \subseteq V$ of seed consumers,
	the influence propagates from $X$ to $Y$ and then to other consumer nodes in $V$ following
	the independent cascade model \cite{kempe2003maximizing}.
Given budgets $b_1$ for providers and $b_2$ for consumers, the AIM problem is to
	select at most $b_1$ seed providers and $b_2$ seed consumers such that 
	the expected number of activated consumer nodes after the diffusion process is maximized.

One important nature of the AIM problem formulation is that seed consumer selection is non-adaptive.
That is, we need to select seed providers and seed consumers 
	together before we observe the actual cascades from the seed providers.
This is motivated by long-term marketing campaigns, during which repeated cascades may be
	generated from content providers.
For such campaigns it is impractical for advertisers to adaptively select
	seed consumers for every cascade, and thus non-adaptive seed consumer selection
	aiming at maximizing the cumulative effect over multiple cascades is desirable.
\ignore{
This is suitable for the marketing campaigns that are meant to be executed over a period of time
	during which influence cascades may be generated repeatedly from the seed providers.
In such campaigns the advertisers cannot afford to dynamically change seed consumers for every
	cascade from the seed providers, and thus it is appropriate to choose non-adaptive solutions.
}

\subsection*{Our results}

We study both the hardness of the AIM problem and its approximation algorithms
	in the independent cascade (IC) model \cite{kempe2003maximizing}.
In terms of hardness, we warm up (Section \ref{sec:Two-layers}) with an easy result
	that finding any constant-factor approximation for AIM
	(even when the social network graph has no edges at all)
	is as hard as approximating the densest-$k$-subgraph problem,
	for which no polynomial-time algorithm is known.
Our main impossibility result (Section \ref{sec:Three-layers}) is that AIM is 
	also \NP-hard to approximate to within any constant factor.
The result is proven by a reduction from Feige's $k$-prover proof 
	system for 3-SAT5 \cite{feige98}. 

In order to overcome the above strong inapproximability results,
	we introduce additional assumptions in our model.
Both hardness reductions construct a providers-consumers bipartite graph $B$
	with a complex and elaborate structure. 
In practice, even if the true relationship is indeed so intricate in nature,
	most of the learning techniques that are used to estimate this relationship
	assume some simple underlying structure - 
	so we can expect the input for our algorithm to be "simple".
In particular, for the specific motivation of influence of content providers on consumers,
	a common assumption in the construction of the influence matrix is that it is (approximately) low-rank (e.g. the "Netflix Problem"; \cite{KBV09-netflix}).
This assumption is typically motivated by modeling the relationship between content and consumers
	via a small number of (hidden) features.
In Section \ref{sec:Constant-rank-adjacency} we show that 
	when the weighted bi-adjacency matrix $P$ has constant rank,
	we can approximate AIM to within a factor of $(1-1/\e-\varepsilon)^3$ in polynomial time for any (polynomially small) $\varepsilon > 0$.
Our algorithmic result can be generalized to accommodate any diffusion model in the social network
	that has a monotone, submodular, and polynomial-time computable influence spread function.
\ignore{
In order to overcome the above hardness results, we assume that the weighted bi-adjacency matrix $M$ of the 
	bipartite graph has constant rank $r$,
	which can be justified by some empirical studies \cite{xxx}.
\wei{Aviad may need to add some citations to justify the constant rank
	assumption.}
We show that for constant rank $r$ matrix $M$, for any $\varepsilon>0$, we can approximate
	AIM to within $(1-1/\e)^2(1-1/\e-\varepsilon)$ in polynomial time.
\wei{We may need to replace $(1-1/\e)$ with $(1-1/\e-\varepsilon)$ and add with high probability.}
}


\subsection{Related work}

Influence maximization is first studied as an algorithmic problem with application to viral marketing
	by \citet{domingos2001mining,richardson2002mining}.
\citet{kempe2003maximizing} first formulate it as a discrete optimization problem.
They summarize the independent cascade model and linear threshold model, and apply submodular function maximization to
	obtain approximation algorithms for influence maximization.
Extensive research has been done since to improve the scalability of the algorithm, extending the model to
	competitive setting, etc. (cf. \cite{CLC13}).

Conceptually, amphibious influence maximization combines viral marketing with traditional marketing via content providers, and thus
	it enriches viral marketing and its technical formulation of influence maximization to a new level.
Technically, AIM also contains influence maximization as a special case: when 
	we have provider budget $b_1 = |U|$ (allowing all providers to be seeds) and bi-adjacency matrix to be all-one matrix
	(providers would deterministically activate all seed consumers), AIM is reduced to the classical influence maximization problem.

Recently, Seeman and Singer initiated a line of works \cite{SeemanS13, BPRSS, RSS14-knapsack-seeding}
on adaptive seeding in social networks that is closely related to ours.
In the adaptive seeding problem, a small subset $X$ of the nodes in a social network is initially available to an advertiser.
In the first stage, the advertiser selects (or {\em seeds}) a subset $S$ of these nodes, who may influence some of the neighbors.
In the second stage, a random subset of the neighbors of $S$ becomes available; 
the advertiser spends the rest of her budget on seeding a subset of the newly available nodes,
in hope to maximize their influence in the social network. 
The most important difference between Seeman and Singer's model and ours is that in the former,
the seeding is {\em adaptive}, i.e. the advertiser waits to see which of the second layer's nodes became available before selecting a subset.
Recall in our model, per contra, the advertiser must seed consumers in advance;
in particular, there is no guarantee that after the edge percolation, 
a seed consumer $y \in Y$ will have live edges with seed content providers.
As already discussed, non-adaptive seeding is appropriate for marketing campaigns during which repeated influence cascades may occur. 

From a technical viewpoint, although we certainly build on ideas from \cite{BPRSS, RSS14-knapsack-seeding},
adaptivity completely changes the approximability of the problem:
all the works above achieve constant-factor approximations in different settings of adaptive seeding,
while we show that in the non-adaptive case, constant-factor approximation is impossible%
\footnote{Note that this is a comparison of the algorithmic limitations within each model, and not a competitive analysis.
In particular, whenever seeding adaptively is feasible, it is of course preferable and can perform much better than "non-adaptive seeding".
As mentioned earlier, our motivation for studying a non-adaptive model is 
settings where the time required to estimate long-term influence of a marketing campaign makes adaptive seeding impractical.}.
Interestingly, all the above works on adaptive seeding use a non-adaptive relaxation of the adaptive problem.
It turns out that unlike the non-adaptive AIM problem, 
{\em the non-adaptive relaxation can be approximated efficiently to within a constant factor}.
(The precise factor of approximation depends on other parameters of the problem such as IC model vs. a general submodular function.)

The problem of acceptance probability maximization (APM) for active friending studied by \citet{YHLC13} is also related to our work.
In APM, a source node needs to select $k$ intermediary nodes situated between the source and the target node 
	in a social network, such that
	if influence from the source only propagates through intermediaries, the probability of activating the target is maximized.
AIM and APM are similar in that both need to select some intermediary nodes between the source and the target and both study the
	non-adaptive version.
However, their assumptions
	on influence cascade are different: APM assumes that cascades only occur in the sub-network 
	consisting of the source, the selected intermediaries and the target,
	while AIM assumes that cascades occur from the selected sources to the selected intermediaries but from intermediaries cascades can reach
	the entire social network.
For APM problem, \citet{YHLC13} only provide a heuristic algorithm and do not have hardness of approximation results. 
\ifFULL
In Appendix \ref{sec:APM},
\else
In our full report \cite{CLLR15}, 
\fi
we build on the hardness of approximation of AIM to prove
	that it is \NP-hard to approximate APM in a general graph
	to within a near-exponential ($2^{n^{1-\varepsilon}}$) factor.

Finally, our algorithm for AIM with constant rank was inspired by recent works
	that (approximately) solve the Densest-$k$-Bi-Subgraph problem in graphs
	with (approximately) constant rank \cite{ALSV13-approx_rank, PMDC14-constant_rank}.

%

\section{Model and Problem Definition}
\label{sec:model}

We consider a (heterogeneous) network consisting of the following two components.
The first is a bipartite graph $B = (U,V,\K)$, where $U$ represents content providers (e.g.
	bloggers, TV programs, etc.), $V$ represents consumers, and $\K$ is the $|U|\times |V|$
	weighted bi-adjacency matrix with $\K_{ij} \in [0,1]$ denoting the probability that
	$i\in U$ would successfully activate $j\in V$ (e.g. $j$ is influenced by the 
	advertisement associated with $i$).
The second is a directed social graph $G=(V,P)$, where $V$ is the same as the $V$ in the bipartite
	graph $B$, 
	and	$P$ is the $|V|\times |V|$ weighted adjacency matrix 
	with $P_{vw}$ denoting the influence probability
	from $v\in V$ to $w\in V$.
We denote the set of directed edges of the social graph as $E=\{(v,w)\mid P_{vw}>0 \}$.

After fixing a set of seed providers $X\subseteq U$ and a set of seed consumers $Y\subseteq V$,
	we model the influence diffusion from $X$ to the nodes in the social graph $G$ as follows.
For each edge $(i,j)$ in $B$ we sample it as {\em live} with probability $\K_{ij}$ and
	{\em blocked} with probability $1-\K_{ij}$; for each edge $(v,w)\in E$, we sample
	it as live with probability $P_{vw}$ and blocked with probability $1-P_{vw}$.
We say that a node $v\in V$ is {\em activated (by the influence of $X$ through $Y$)} if
	there is a path $(x,y,v_1,\ldots, v_t=v)$ with $x\in X$ and $y\in Y$, and all edges
	on the path are live.
Given $X$ and $Y$, we use $\sigma(X,Y)$ to denote the expected 
	number of activated nodes in $V$ (with expectation taken among all samples on 
	all edges), and call it the {\em influence spread} of $X$ and $Y$.

Note that the diffusion model can be equivalently described as follows.\footnote{Equivalence
	is in the sense of the distribution of final set of activated nodes in V.}
First, every seed $i \in X$ independently tries to activate every node 
	$j \in Y$ with success probability $\K_{ij}$, and $j \in Y$ is activated as long as some
	$i \in X$ activates $y$, and nodes outside $Y$ are not activated by seeds in $X$.
Let  $S\subseteq Y$ be the (random) set of nodes activated in $Y$.
Then we treat $S$ as the seed set and apply the independent 
	cascade model~\cite{kempe2003maximizing} to start the influence diffusion from 
	$S$ in the social network $G$ using influence probabilities $P$: namely at each discrete
	time step, each newly activated node $v\in V$ has one chance to activate each of its
	outgoing neighbor $w\in V$ with probability $P_{vw}$.

Our goal is to find a set $X$ of seed providers of size $b_1$ and a set
	$Y$ of seed consumers of size $b_2$ such that they work together to generate the largest influence
	spread, which we formally define below.

\begin{defn} [Amphibious Influence Maximization] \label{def:AIM}
In the {\sc Amphibious Influence Maximization} ({\sc AIM}) problem, 
we are given a bipartite graph $B = (U,V,\K)$ and a directed social graph 
	$G=(V,P)$, and budgets $b_1$ and $b_2$, and we want to find 
	a subset $X^*\subseteq U$ of size $b_1$ and a subset $Y^*\subseteq V$ of size $b_2$
	such that the influence spread of $X^*$ and $Y^*$ are maximized, that is,
	finding $X^*$ and $Y^*$ such that
\[
(X^*,Y^*) = \argmax_{X\subseteq U, |X|=b_1, Y\subseteq V, |Y|=b_2} \sigma(X,Y).
\]
\end{defn}

Several remarks are now in order.
First, when we set $b_1=|U|$ and $\K$ as an all-one matrix, the AIM problem is reduced to 
	the classical influence maximization problem defined in~\cite{kempe2003maximizing}.
Thus, AIM is a generalization of the classical influence maximization problem such that
	it considers interactions between the provider nodes $U$ and consumer nodes $V$ and they
	have to work together to spread the influence.
Second, it is easy to see that
	when either fixing set $X$ or $Y$, $\sigma(X,Y)$ as a set function of the other variable
	is monotone and submodular.\footnote{A set function $f$ is monotone if
		for all $S\subseteq T$, $f(S)\le f(T)$, and submodular if 
		for all $S\subseteq T$ and $v\not\in T$, $f(S\cup \{v\}) - f(S) \ge 
		f(T\cup \{v\})- f(T)$.}
However, the interaction of $X$ and $Y$ makes the AIM problem much harder
	than the classical influence maximization problem:
	we need both nodes in $X$ and $Y$ to generate 
	influence and missing either of them will not work.
Finally, our results can be generalized to allow 
	diffusion models in the social network to follow any monotone and submodular function, and non-seed consumers to be influenced
	with background probabilities.
To simplify the presentation, we focus on the main problem given in Definition~\ref{def:AIM}
	and discuss the generalization in Section~\ref{sec:general}.


\section{Hidden-clique hardness}\label{sec:Two-layers}

Before we derive our main hardness result,
we briefly describe in this section a much simpler reduction which gives a weaker hardness, "Hidden-clique hardness" (sometimes also "planted-clique").
Another feature of this result is that in the hard instance
the social network graph $G$ has no edges at all!

\paragraph{Hidden clique}

In an Erdos-Renyi random graph ${\cal G}\left(n,1/2\right)$ the largest
clique size is approximately $2\log_{2}n$, with high probability (e.g. \cite{AS92-probabilistic_method}).
We can ``plant'' a clique of size $t\gg2\log n$,
by choosing $t$ nodes at random, and connecting all the edges between
them. The hidden clique problem (e.g. \cite{densest_k-subgraph_AAMMW11})
is to distinguish between a graph sampled from ${\cal G}\left(n,1/2\right)$
and a graph from ${\cal G}\left(n,1/2\right)$ with a planted clique.
Alon et al. \cite{densest_k-subgraph_AAMMW11} reduce this problem
to solving the following gap version of {\sc Densest $k$-Subgraph}.
Although the planted clique problem
has been extensively studied, the best known
algorithms run in quasi-polynomial time ($n^{O\left(\log n\right)}$);
in particular, there are no known polynomial-time algorithms for the
hidden clique problem.
\begin{theorem}
(Theorem 1.3 of \cite{densest_k-subgraph_AAMMW11}) If there is no
polynomial-time algorithm for the hidden clique problem with a planted
clique of size $t=n^{1/3}$, then for any $\delta>0$, there is no
polynomial-time algorithm that given a graph $G$ distinguishes between:
\begin{description}
\item [{Completeness}] $G$ has a clique of size $k$; and
\item [{Soundness}] Every $k$-subgraph of $G$ has density at most $\delta$.
\end{description}
\end{theorem}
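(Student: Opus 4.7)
The plan is to construct a gap-amplifying reduction from the hidden clique problem to the promise (gap) version of Densest $k$-Subgraph. Given an input graph $H$ on $n$ vertices, I would build a derived graph $G$ whose vertex set consists of $r$-tuples (or $r$-subsets) of vertices from $V(H)$, for a parameter $r$ to be calibrated carefully. I would declare two such tuples adjacent in $G$ exactly when the vertices they comprise in $H$ satisfy a local density condition---for instance, the bipartite graph induced between them in $H$ is complete, or their union induces a clique in $H$. This kind of product/blow-up construction is the standard vehicle for amplifying the density gap between the random and planted cases.

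For completeness, if $H$ contains a planted clique of size $t = n^{1/3}$, then the collection of composite objects supported inside this clique forms a complete subgraph of $G$: every such tuple trivially satisfies the local density condition, so they are pairwise connected. Setting $k$ to match the number of these objects yields a $k$-clique in $G$ of density $1$, which verifies the completeness side of the gap.

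For soundness, when $H$ is drawn from ${\cal G}(n,1/2)$, I would argue via a union bound combined with concentration inequalities that, with high probability over $H$, every $k$-subset of $V(G)$ induces a subgraph of density at most $\delta$. The intuition is that each edge of $G$ corresponds to many simultaneous edge events in $H$, so edges in $G$ become exponentially rare as $r$ grows; pushing this rarity through a union bound over all $\binom{|V(G)|}{k}$ candidate subsets, while controlling correlations between edges of $G$ that share underlying vertices of $H$, should yield the $\delta$ density bound.

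The main obstacle is precisely this soundness analysis: one must rule out that a random $H$ secretly hosts many overlapping pseudo-cliques that collaboratively produce a dense $k$-subgraph in $G$ even absent any plant. Calibrating $r$ is delicate---too small an $r$ fails to amplify the gap, while too large an $r$ prevents even the planted clique from generating enough composite objects to reach the target size $k$. Handling the correlations among composite objects sharing vertices in $H$ is the principal technical hurdle, typically addressed by bounding higher moments of the count of ``locally dense'' substructures in ${\cal G}(n,1/2)$, which is where the quantitative heart of the argument lies.
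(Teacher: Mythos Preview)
The paper does not prove this theorem. It is quoted verbatim as Theorem~1.3 of Alon et al.~\cite{densest_k-subgraph_AAMMW11} and used purely as a black box: the very next sentence in the paper reads ``Hardness for {\sc AIM} follows as a corollary,'' and the paper proceeds directly to prove that corollary by a short reduction from the gap version of {\sc Densest $k$-Subgraph} to {\sc AIM}. There is therefore no ``paper's own proof'' of this statement to compare your proposal against.

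Your sketch is a reasonable outline of a gap-amplification argument (graph powers via $r$-subsets, with adjacency encoding a local clique condition) and is in the spirit of how such reductions are typically built. Whether it matches what Alon et al.\ actually do, and whether the parameter calibration and the soundness union bound go through with $t = n^{1/3}$, are questions about \cite{densest_k-subgraph_AAMMW11}, not about the present paper. If you intended to supply a proof for a result the paper merely cites, you should consult that reference directly; as far as this paper is concerned, no proof is expected or given.
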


Hardness for {\sc AIM} follows as a corollary:

\begin{corollary}
If there is no polynomial-time algorithm for the hidden clique problem
with a planted clique of size $t=n^{1/3}$, 
then {\sc AIM} cannot be approximated to within a constant factor in polynomial time - even in the special case where the social network graph has no edges.
\end{corollary}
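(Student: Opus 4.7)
The plan is to reduce from the gap version of \textsc{Densest $k$-Subgraph} given by the stated theorem. Given an instance graph $H = (W, E_H)$ on $n$ vertices and parameter $k$, I construct an AIM instance as follows: take $U = V = W$; set the bi-adjacency entry $M_{ij} = \varepsilon$ whenever $(i,j) \in E_H$ (in both directions) and $M_{ij} = 0$ otherwise, with $\varepsilon := 1/k^{2}$; give the social graph no edges at all; and choose $b_1 = b_2 = k$.

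Because the social graph is empty, the influence spread reduces to
\[
\sigma(X,Y) = \sum_{y \in Y} \Bigl(1 - \prod_{x \in X}(1 - M_{xy})\Bigr) = \sum_{y \in Y}\Bigl(1 - (1-\varepsilon)^{|N_H(y) \cap X|}\Bigr).
\]
Since $|N_H(y) \cap X| \le k$ and $\varepsilon k = 1/k = o(1)$, a first-order expansion gives $\sigma(X,Y) = (1 - O(1/k)) \, \varepsilon \, e_H(X,Y)$, where $e_H(X,Y) := |\{(x,y) \in X \times Y : (x,y) \in E_H\}|$ counts ordered edge pairs. In the completeness case, taking $X = Y$ equal to a $k$-clique in $H$ gives $e_H(X,Y) = k(k-1)$, hence $\sigma = \Theta(\varepsilon k^2)$. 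In the soundness case, for arbitrary $X, Y$ of size $k$, first observe that $e_H(X,Y) \le 2|E(H[X \cup Y])|$ with $|X \cup Y| \le 2k$; second, a standard random-subsampling argument (a uniformly random $k$-subset of a $2k$-vertex set retains each edge with probability $\Theta(1)$) propagates the hypothesis ``every $k$-subgraph of $H$ has density at most $\delta$'' to an $O(\delta)$-density bound on every $2k$-subgraph. Together these yield $e_H(X,Y) = O(\delta k^2)$ and hence $\sigma(X,Y) = O(\varepsilon \delta k^2)$, so the completeness-to-soundness ratio is $\Omega(1/\delta)$.

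The main (minor) obstacle is bridging the mismatch that AIM allows $X \ne Y$ while the density hypothesis controls only size-$k$ subgraphs; the subsampling step handles this at the cost of a harmless constant factor. The choice $\varepsilon = 1/k^2$ is delicate only insofar as it must be small enough to legitimize the first-order approximation of $\sigma$, yet any inverse-polynomial value suffices. Since the stated theorem lets $\delta$ be any arbitrarily small positive constant, the resulting gap can be made any desired constant, ruling out constant-factor polynomial-time approximation of AIM under the hidden-clique assumption --- and the construction uses no social-network edges, exactly matching the special case in the corollary.
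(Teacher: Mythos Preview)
Your proposal is correct and follows essentially the same reduction as the paper. The one notable difference is parametric: the paper sets $b_1 = b_2 = k/2$ and, in the completeness case, \emph{partitions} the $k$-clique into two disjoint halves $X,Y$; this guarantees $|X \cup Y| \le k$ in the soundness case, so the density hypothesis applies directly without any subsampling. Your choice $b_1 = b_2 = k$ (with $X = Y$ equal to the full clique) makes completeness slightly tidier but forces the extra random-subsampling step to transfer the $\delta$-density bound from $k$-subgraphs to $2k$-subgraphs; this is standard and costs only a constant. The paper also takes edge weight $1/n^2$ rather than $1/k^2$, but any sufficiently small inverse-polynomial works, as you note.
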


\begin{proof}
We give a reduction from the {\sc Densest $k$-Subgraph} problem.
\begin{description}
\item [{Reduction}] Given an instance $G^{DkS}=\left(V^{DkS},E^{DkS}\right)$  of {\sc Densest $k$-Subgraph}
	with gap parameter $\delta$,
we construct an {\sc AIM} weighted bipartite graph $B=\left(U,V,M\right)$ 
between content provider nodes and consumer nodes as follows: 
We identify both $U$ and $V$ with the original set of vertices $V^{DkS}$
(i.e. our {\sc AIM} instance has twice as many vertices).
For any $u\in U$ and $v \in V$, we set $M_{u,v} = 1/n^2$ 
if the corresponding vertices in $G^{DkS}$ are distinct and have an edge between them,
and $M_{u,v} = 0$ otherwise. 
We set the budgets to $b_1 = b_2 = k/2$.

\item [{Completeness}] If $G^{DkS}$ contains a $k$-clique $C\subseteq V$,
then partition $C$ into two subsets of size%
\footnote{We assume without loss of generality that $k$ is even. Given a polynomial
time algorithm for even $k$ it is easy to extend to an algorithm
for $k-1$; e.g. by adding a dummy vertex that is connected to all
vertices in the graph.}
$k/2$ and label them $X'$ and $Y'$. 
Consider their respective copies $X\subseteq U$ and $Y\subseteq V$
in the bipartite graph: it follows from the construction that $X\cup Y$
is a bi-clique.
Thus, every consumer in $Y$ has probability $1-\left(1-1/n^2\right)^{k/2} = \left(1-o(1)\right) \cdot k/\left(2n^2\right)$.
Summing over all $k/2$ consumers, the expected number of activated nodes is $OPT = \left(1-o(1)\right) \cdot k^2/\left(4n^2\right)$.
\item [{Soundness}] 
Let $X,Y$ be an optimal solution of the {\sc AIM} instance.
Let $S \subseteq V^{DkS}$ be the union of the copies of $X$ and $Y$ in $G^{DkS}$.
By the premise, $S$ contains at most $\delta {{k}\choose{2}}$ edges.
Thus there are at most $2\delta {{k}\choose{2}}$ edges between $X$ and $Y$,
each with weight $1/n^2$ 
(we may count some edges twice in case the copies of their endpoints belong to both $X$ and $Y$).
Therefore, $\sigma(X,Y) < \delta k^2 / n^2 < 5\delta\cdot OPT$.
\end{description}
\end{proof}

\ignore{

\begin{proof}
We reduce {\sc Densest $k$-Subgraph} to {\sc Densest Bipartite
$k$-Subgraph}, the restricted problem for bipartite graphs. The
reduction to the special case of {\sc AIM} follows by taking all the edges in the
bipartite graph $B$
to have sufficiently small probability $p=o\left(1/n\right)$ and removing
	all edges in the social graph, since in this case
	the probability that a node in the social graph is activated is approximately
	its degree in the bipartite graph multiplied by $p$.
\begin{description}
\item [{Reduction}] Given a graph $G=\left(V,E\right)$ we construct $G^{B}=\left(V^{B},E^{B}\right)$
as follows: let $V^{1},V^{2}$ be two copies of $V$, and set $V^{B}=V^{1}\cup V^{2}$.
For any $u^{1}\in V^{1}$ and $v^{2}\in V^{2}$, let $u$ and $v$,
respectively be the corresponding vertices in $V$. We define $E^{B}$
to have an edge $\left(u^{1},v^{2}\right)$ if and only if the corresponding
edge $\left(u,v\right)$ is in $E$. Observe that since $\left(u,v\right)$
also induces the edge $\left(u^{2},v^{1}\right)$, the number of edges
in the bipartite graph also doubles: $\left|E^{B}\right|=2\left|E\right|$.
\item [{Completeness}] If $G$ contains a $k$-clique $C\subseteq V$,
then partition $C$ into two subsets of size%
\footnote{We assume without loss of generality that $k$ is even. Given a polynomial
time algorithm for even $k$ it is easy to extend to an algorithm
for $k-1$; e.g. by adding a dummy vertex that is connected to all
vertices in the graph.%
} $k/2$ and label them $C_{1}$ and $C_{2}$. Consider their respective
copies $C_{1}^{1}\subseteq V^{1}$ and $C_{2}^{2}\subseteq V^{2}$
in the bipartite graph: it follows from the construction that $C_{1}^{1}\cup C_{2}^{2}$
is a bi-clique.
\wei{Just checking: for bipartite graph, the above is still called $k$-subgraph, not
$k/2$-subgraph. I guess so.}
\aviad{I think that it is a $k$-subgraph and a $k/2$-bi-subgraph. Adapting the reduction to the other definition is easy (just add an edge between each vertex and its copy).}
\item [{Soundness}] Suppose that $G^B$ 
has a subgraph $\left|S^{1}\cup T^{2}\right|=k$
of density greater than $2\delta$, i.e. $\left|E^{B}\cap\left(S^{1}\times T^{2}\right)\right|>2\delta{k \choose 2}$.
\wei{Just checking: even for bipartite graph, the density of a $k$-subgraph is still
with respect to ${k \choose 2}$.}
\aviad{Again, this is a matter of whether we're looking for the "densest subgraph in a bipartite graph" or the "densest bi-subgraph".
I don't feel strongly about it either way.}
Consider their respective copies $S,T\subseteq V$ ($S$ and $T$
are not necessarily disjoint). Each edge in $E\cap\big(\left(S\cup T\right)\times\left(S\cup T\right)\big)$
corresponds to at most two edges in $E^{B}\cap\left(S^{1}\times T^{2}\right)$.
Therefore $\left|E\cap\big(\left(S\cup T\right)\times\left(S\cup T\right)\big)\right|\geq\frac{1}{2}\left|E^{B}\cap\left(S^{1}\times T^{2}\right)\right|>\delta{k \choose 2}$.
Let $W$ be any subset of $V$ that includes $S\cup T$ and has size
$k$. Then the density of $W$ is also greater than $\delta$: $\left|E\cap\left(W\times W\right)\right|\geq\left|E\cap\big(\left(S\cup T\right)\times\left(S\cup T\right)\big)\right|>\delta{\left|W\right| \choose 2}$.
\end{description}
\end{proof}
}

\section{\NP-hardness of approximation}
\label{sec:Three-layers}

\ignore{
{\sc Three-Layer Non-Adaptive Max-Cover}: Given a tripartite graph
over $\left(U,V_{1},V_{2}\right)$ with probabilities $p_{e}\in\left[0,1\right]$
on the edges, find subsets $T_{i}\subseteq U_{i}$ of total size $B$
that maximize the expected number of nodes in $T_{3}$ that have a
path from $T_{1}$ via $T_{2}$:
\begin{flalign*}
\max & f\left(T_{1},T_{2},T_{3}\right)=\sum_{c\in T_{3}}1-\prod_{b\in T_{2}}\left(1-p_{b,c}\cdot\left(1-\prod_{a\in T_{1}}\left(1-p_{a,b}\right)\right)\right)\\
 & \sum\left|T_{i}\right|\leq B
\end{flalign*}
}

In this section we prove our main hardness result, namely:

\begin{theorem} \label{thm:np-hardness}
 {\sc AIM} is \NP-hard to approximate to within any constant factor. 
\end{theorem}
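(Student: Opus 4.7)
The plan is to reduce from Feige's $k$-prover proof system for 3-SAT5, which provides, for any constant $k$, a gap instance with the following guarantee: given a 3-SAT5 formula $\phi$, the verifier queries $k$ provers and accepts with probability $1$ if $\phi$ is satisfiable, but with probability at most $c^{k}$ for some constant $c<1$ if $\phi$ is unsatisfiable. To obtain inapproximability within an arbitrary constant $\alpha>0$, we choose $k$ large enough that $c^{k}<\alpha$.

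To build the AIM instance, I would split the $k$ provers into two groups: one group's (question, answer) pairs become provider nodes in $U$ and the other group's pairs become a designated subset of consumer nodes in $V$; each verifier test (a tuple of answers that must be mutually consistent) becomes a target consumer node in $V$. The bi-adjacency matrix $M$ would have a nonzero entry between provider $(q_U,a_U)$ and consumer $(q_V,a_V)$ exactly when these answers are jointly possible for a common verifier query, and edges in the social graph $G$ would link the consumer-group nodes to target nodes whose tests they participate in. The budgets $b_1,b_2$ are set to equal the number of distinct (question, answer) pairs an honest strategy produces for each group.

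For completeness, a satisfying assignment $\alpha$ induces a deterministic prover strategy, hence a valid choice of $X$ and $Y$ of the prescribed sizes such that every target test has a live $X\to Y\to$ target path with high probability; the expected spread is then close to $|V_T|$. For soundness, given any $(X,Y)$ with spread at least $\alpha\cdot|V_T|$, I would extract a randomized prover strategy by having each prover, on its query, sample an answer uniformly from those represented in $X$ or $Y$ consistent with that query. An averaging argument over the targets shows the induced strategy passes at least an $\Omega(\alpha)$ fraction of the verifier's tests, contradicting the $c^{k}$ soundness of Feige's system once $k$ is chosen sufficiently large.

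The main obstacle will be calibrating the multiplicative structure: a target $t$ is activated only through live paths $x\to y\to\cdots\to t$, with each edge independently live, so I must design edge probabilities and gadgets so that (i) in the honest case every test is activated with probability close to $1$, yet (ii) in the soundness case no collection of inconsistent answer-combinations can accumulate enough small contributions to spuriously cover many targets. Equally delicate is enforcing the budgets so that a "lazy" cheating solution cannot simply include every provider/consumer node that is relevant, and so that the extraction step in soundness has a well-defined distribution from which to sample a prover strategy; these two constraints must be matched to Feige's parameters carefully to push the gap below any desired constant $\alpha$.
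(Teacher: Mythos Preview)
Your starting point (Feige's $k$-prover system) is right, but the architecture you sketch has a structural problem that the paper's construction specifically avoids. By placing half the provers' $(q,a)$ pairs in the consumer-seed layer and making the targets be the verifier tests, you decouple the random string from the bipartite activation step: a seed consumer $(q_B,a_B,j)$ can be activated by a provider $(q_A,a_A,i)$ that is consistent with it on \emph{some} random string $r$, and then pass influence to a target corresponding to a \emph{different} random string $r'$ on which $j$ also receives question $q_B$. This cross-talk destroys the intended correspondence between ``activated target'' and ``provers agree on that test'', so your soundness extraction (sample a prover answer from those represented in $X$ or $Y$) does not yield a strategy whose acceptance probability is lower-bounded by the fraction of activated targets.

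The paper's reduction handles this differently: all $k$ provers live in $U$, and the seed-consumer layer $V_1$ consists of pairs $(r,\overline{a_r})$ of a random string together with a \emph{full} assignment to all $3\ell$ variables appearing in $r$. Each such node has exactly one potential neighbor in $U$ per prover (the unique $(q,a,i)$ consistent with $\overline{a_r}$), and every $U\!\to\! V_1$ edge is given probability $1/k$. This creates the gap you are looking for directly: in the completeness case each selected $(r,\overline{a_r}^*)$ has $k$ neighbors in $X$ and is activated with probability $\approx 1-1/\e$, whereas in the soundness case a Markov/heavy--light argument shows almost every $(r,\overline{a_r})$ has at most one neighbor in $X$ and hence activation probability at most $1/k$. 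A third layer $V_2$ with $\eta$ copies per random string (edges of probability $1$ from $V_1$) is needed to force the budget to spread across random strings; without it a solution could concentrate all its budget on a few $r$'s and your averaging argument over targets breaks. Your proposal identifies the right difficulties (``calibrating the multiplicative structure'', ``enforcing the budgets'') but the $1/k$-edge trick together with the $(r,\overline{a_r})$ middle layer is precisely the missing idea that resolves them.
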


\paragraph{Proof outline}
We reduce from Feige's $k$-prover proof system \cite{feige98}. The provers' answers
to questions correspond to the provider nodes in $U$.
The provider nodes are connected to a subset $V_1 \subset V$ of the consumers,
on which the verifier can test the provers' answers.
Since the edges from $U$ to $V_1$ appear with low probability,
it is significantly more cost-effective to select a few nodes from
$V_1$ 
with many neighbors in $U$. Intuitively, this corresponds
to a verifier's test which many provers would pass. 
By Theorem \ref{thm:feige},
if we start from a satisfiable formula, all $k$ provers will agree
- versus less than $2$ provers that agree for an unsatisfiable formula.
The rest of the consumers, $V_2 = V \setminus V_1$, 
have incoming edges from influential consumers in $V_1$. 
They will guarantee that the provers answer (almost) all the verifier's questions.

Notice that our hard instance is a three-layered graph. 
We henceforth call the provider nodes the {\em top layer},
the influential consumers $V_1$ constitute the {\em middle layer},
whereas the {\em bottom layer} has the rest of the nodes.

\paragraph{$k$-prover proof system}
Consider $k$ provers trying to prove the satisfiability of some 3-SAT5
formula over $n$ variables.
A 3-SAT5 formula is a conjunctive-normal-form (CNF) formula where each variable appears in exactly $5$ clauses,
and each clause contains exactly $3$ variables; 
notice that there are $5n/3$ clauses.
The verifier selects $l$ clauses (with replacement) uniformly and independently at random.
For each clause, the verifier selects one of the participating variables uniformly and independently at random;
we call those the {\em distinguished} variables.
Each question consists of $l/2$ clauses, and $l/2$ distinguished variables 
	from the remaining $l/2$ clauses.
An answer $a$ to question $q$ consists of assignments to the $l/2+3l/2=2l$ variables
in question. 
Let $R$ be the set of random strings, and $Q$ be the set of
	questions.
For each random string $r\in R$, we associate a question
$q\in Q$ for each prover $i\in\left[k\right]$; we denote this as $(q,i) \in r$.
We henceforth abuse notation and also use $R$ and $Q$ to
denote the corresponding cardinalities $R=n^{l}\cdot5^{l}$ and $Q=n^{l}\cdot\left(\frac{5}{3}\right)^{l/2}$.
Given the $k$ provers' answers, Feige's verifier tests the provers
answers by comparing their answers on the $l$ distinguished variables.
(We will diverge from Feige's construction at this point and use a
stronger test that compares the provers' answers on all $3l$ variables.)
For constant $l$, Feige proves the following theorem.
\begin{theorem}
\label{thm:feige}(Lemma 2.3.1 in \cite{feige98}) Given
a $k$-prover system, it is \NP-hard to distinguish a 3-SAT5 formula between the following:
\begin{description}
\item [{Completeness}] all the provers pass the verifier's test with probability
$1$; and
\item [{Soundness}] the probability that any pair of provers pass the verifier's
test is at most $2^{-cl}$, for some constant $c>0$.
\end{description}
\end{theorem}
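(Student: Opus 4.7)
The plan is to reduce from Feige's $k$-prover proof system for 3-SAT5 (Theorem~\ref{thm:feige}), using the prover count $k$ and the question-length parameter $l$ as tuning knobs. Given any target constant approximation ratio $\alpha$, I will choose $k = \Theta(\alpha)$ and $l$ large enough that Feige's soundness parameter $2^{-cl}$ is much smaller than $1/k^2$; this choice drives the completeness-to-soundness gap above $\alpha$. The reduction builds the three-layered AIM instance flagged in the outline. The top layer $U$ contains a provider $u_{i,q,a}$ for every triple (prover $i \in [k]$, question $q \in Q$, answer $a$); the provider budget is $b_1 = kQ$, matching the size of a deterministic prover strategy. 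The middle layer $V_1 \subseteq V$ contains one consumer $v_{r,\tau}$ for each random string $r \in R$ and each locally consistent assignment $\tau$ to the distinguished variables of the $k$ questions drawn from $r$, connected by a bipartite edge $u_{i,q,a} \to v_{r,\tau}$ of probability $p = 1/(kQ)$ whenever $q = q_i(r)$ and $a$ agrees with $\tau$ on $q$'s variables. The bottom layer $V_2 \subseteq V$ consists, for each $r$, of a bundle $W_r$ of $N \gg 1/p$ clone consumers reached from every $v_{r,\tau}$ by near-deterministic social-graph edges; $V_2$ amplifies any middle-layer activation so that the payoff of seeding $V_1$ dominates the self-contribution of the seeds. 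I set the consumer budget $b_2 = |R|$.

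For completeness, a satisfying assignment yields deterministic answers $a_i(q)$, and the natural seeding $X = \{u_{i,q,a_i(q)}\}$, $Y = \{v_{r,\tau_r}\}$ (where $\tau_r$ is the configuration on which all $k$ provers agree for $r$) gives each $v_{r,\tau_r}$ exactly $k$ incoming bipartite candidates each live with probability $p$, so it activates with probability $1-(1-p)^k \approx kp$ and then propagates to $W_r$, yielding influence $\Omega(R \cdot N \cdot kp)$. For soundness under an unsatisfiable formula, I first round any pair $(X,Y)$ to a distribution over deterministic prover strategies (losing only a constant factor). Feige's theorem bounds the expected number of prover pairs consistent on a uniform random $r$ by $\binom{k}{2} \cdot 2^{-cl}$, and a second-moment computation then bounds the expected size of the largest agreeing subset on $r$ by $1 + O(k \cdot 2^{-cl/2})$. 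Summing over $r \in R$ and observing that seeding $V_2$ directly is never preferable when $N$ is large (because $V_2$ nodes lack outgoing social-graph edges), the NO-case influence is at most $O(R \cdot N \cdot p \cdot (1 + k \cdot 2^{-cl/2}))$. The ratio of the two bounds is $\Omega(k)$, which exceeds $\alpha$ by our choice of parameters.

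The main obstacle will be making the soundness argument robust against non-canonical seed choices: an adversary could (i) place multiple answers per $(i,q)$ in $X$, (ii) seed several $v_{r,\tau}$ with different $\tau$ for the same $r$, or (iii) divert consumer budget from $V_1$ to $V_2$. Case~(i) is handled by the rounding step, since a uniform one-answer-per-$(i,q)$ random restriction of $X$ can only decrease the objective by a constant factor while preserving Feige's pairwise soundness in expectation. Cases~(ii) and~(iii) require an exchange argument exploiting the concavity of $1-(1-p)^d$ in $d$ together with the $N \gg 1/p$ amplification gap between a $V_1$ seed and a $V_2$ seed. Once these technical components are verified and $l$ is chosen as a sufficiently large function of $\alpha$, the reduction yields the claimed \NP-hardness of approximating AIM to within any constant factor.
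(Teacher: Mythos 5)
There is a genuine gap, and it is a categorical one: your proposal does not prove the statement at all. Theorem~\ref{thm:feige} \emph{is} Feige's completeness/soundness guarantee for the $k$-prover proof system --- it is quoted verbatim from Lemma~2.3.1 of \cite{feige98}, and the paper supplies no proof of it because it is an imported result. What you have written is a sketch of a \emph{reduction from} this theorem to the \NP-hardness of approximating {\sc AIM} (i.e., an argument in the spirit of Theorem~\ref{thm:np-hardness}); your very first sentence takes Theorem~\ref{thm:feige} as a given tool, so the object you were asked to establish appears in your argument only as a hypothesis. Actually proving Feige's lemma would require the machinery behind it --- the PCP theorem and the analysis of two-prover one-round games with the ``either all provers agree or almost no pair does'' amplification over $l$ parallel clauses --- none of which is present in your write-up, and none of which can be recovered from a reduction to a downstream problem.

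Even read charitably as a blind attempt at the paper's Theorem~\ref{thm:np-hardness}, your sketch diverges from the paper's construction in ways that would need care: you index the middle layer by assignments to the \emph{distinguished} variables only (the paper deliberately departs from Feige here and uses assignments to all $3l$ variables, precisely so that a middle-layer node with two neighbors in $X$ certifies a genuine two-prover agreement); your bipartite edge probability is $1/(kQ)$ rather than the paper's $1/k$, which changes the completeness calculation; and your soundness step (``round to a distribution over deterministic strategies, losing only a constant factor'') is asserted rather than argued, whereas the paper handles non-canonical seed sets via an explicit Markov-inequality partition into heavy and light $(q,i)$ pairs (Lemma~\ref{lem:good-random-strings}). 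But these are secondary; the primary defect is that the statement under review is never proved.
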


\paragraph{Construction}

We construct a directed graph with three layers: $U,V_1,V_2$.
The first layer $U$ is precisely the set of "content providers" in our model.
The set of "consumers" populates the middle and bottom layers $V = V_1 \cup V_2$.
In terms of the weighted bi-adjacency matrix, 
the layered structure means that $M_{u,v} = 0$ for all $u\in U$ and $v \in V_2$,
and similarly $P_{v_1,v_2} = 0$ unless $v_1 \in V_1$ and $v_2 \in V_2$. 

Going back to the $k$-prover system, 
the top layer $U$ corresponds to triplets of provers' answers
to questions; the middle layer $V_{1}$ corresponds to assignments
to variables -{\em distinguished and non-distinguished}- that may
appear in the verifier's question to any of the provers; finally,
the bottom layer corresponds to the random strings of the verifier.
All the edges go from the top to the middle layer, or from the middle to the bottom layer. 
In particular, the graph is tri-partite.

More specifically, for each triplet $\left(q,a,i\right)$ of (question,
answer, prover) we have a corresponding node in $U$. 
For each pair $\left(r,\overline{a_{r}}\right)$
of (verifier's random string, assignment to all $3l$ variables) we
have a node in $V_{1}$. 
Notice that this is different from \cite{feige98},
where the elements to be covered correspond to $\left(r,a_{r},i\right)$
with $a_{r}$ being the assignment only for the distinguished variables.
The $\left(q,a,i\right)$ node is connected
to all the nodes $\left(r,\overline{a_{r}}\right)$ such that: $\left(q,i\right)\in r$,
and when restricting $\overline{a_{r}}$ to the variables specified
by $\left(q,i\right)$, it is equal to $a$. In particular, for each
$i$, each $\left(r,\overline{a_{r}}\right)$ corresponds to only
one $\left(q,a,i\right)$. 
We set the top-layer budget to be the number of nodes in $U$ that correspond to a single assignment, $b_1 = kQ$; 
similarly we let $b_2 = R$ represent the number of nodes in $V_1$ that match the same assignment.
Finally, all the edges from $U$ to $V_{1}$ 
have probability $1/k$.

For each random string $r$, 
we have $\eta$ nodes in the bottom layer, $V_{2}$.
We choose a sufficiently large $\eta$ 
to ensure that most of the utility comes from the bottom layer.
The nodes corresponding to each $r$ are connected to all the nodes 
$\left(r,\overline{a_{r}}\right)$ in $V_{1}$ with probability $1$. 
The role of this layer is to force any good assignment to spread 
its budget across the different random strings
(i.e. make sure that the provers answer all the questions).

See Table \ref{tab:Summary-of-notation} for a summary of notation.

\begin{table}
\caption{Summary  of notation in main reduction}\label{tab:Summary-of-notation}
\vspace{0.7cm}
\begin{tabular}{|c|c|c|}
\hline
Notation & Interpretation in $k$-provers system & Vertices in {\sc AIM}\tabularnewline
\hline
\hline
$\left(q,a,i\right)$ & question, answer, prover & $1$ vertex in $U$\tabularnewline
\hline
$\left(q,i\right)$ & question, prover & $2^{3l/2}$ vertices in $U$\tabularnewline
\hline
$r$ & random string & %
\begin{tabular}{c}
$k\cdot2^{3l/2}$ vertices in $U$\tabularnewline
($\forall$ $\left(q,i\right)\in r$ and $a\in\left\{ 0,1\right\} ^{3l/2}$)\tabularnewline
\end{tabular}\tabularnewline
\hline
$\left(r,\overline{a_{r}}\right)$ & random string, assignment to all $3l$ variables & $1$ vertex in $V_{1}$\tabularnewline
\hline
$r$ & random string & $\eta$ vertices in $V_{2}$\tabularnewline
\hline
$\left(r,h\right)$ & random string, copy & $1$ vertex in $V_{2}$\tabularnewline
\hline
\end{tabular}
\end{table}

\paragraph{Completeness}

Given a satisfiable assignment to the 3SAT-5 formula, 
we select in the top layer a subset $S\subset U$
of $kQ$ nodes that correspond to the same assignment.
Because they all correspond to the same assignment, 
for each random string $r$, 
all $k$ corresponding nodes in $S$ are connected
to the common node $\left(r,\overline{a_{r}}^{*}\right)$. 
In the middle layer, we let $T$ be the set of these $R$ nodes 
(i.e. $\left(r,\overline{a_{r}}^{*}\right)$ for $r \in R$).
Before sampling the edges, each $\left(r,\overline{a_{r}}^{*}\right)$
has $k$ neighbors in $S$. After sampling, the probability
that there is a path from $S$ to $\left(r,\overline{a_{r}}^{*}\right)$
is $1-\left(1-\frac{1}{k}\right)^{k}\approx1-1/e$.

Since each node in $T$ has $\eta$ neighbors in $V_2$ (with probability $1$),
the value of this solution is approximately $OPT \approx (1-1/e) R\eta$.

\paragraph{Soundness}

In an unsatisfiable instance, any two provers agree for at most a
$\left(2^{-cl}\right)$-fraction of the random strings. 
We will show
in Lemma \ref{lem:good-random-strings} that there are 
at most $\left(2\cdot2^{-\left(1/3\right)cl} \cdot R\right)$ 
{\em good} random strings $r$, which are strings $r$ such that there is a node $\left(r,\overline{a_{r}}\right)$
with more than one neighbor in $S$. 
Since for each random
string $r$ there are only $\eta$ nodes in $V_{2}$, each of the good random
strings contributes at most $\eta$ to the value of the solution. 
Before sampling the edges, 
any node that does not correspond to a good random string
has at most one neighbor in $S$. 
After sampling, the probability that any such node has a neighbor in $S$ is at most $1/k$. 
Since each node in $V_{1}$ has $\eta$ neighbors in $V_{2}$, 
the total contribution from $R$ nodes that do not correspond to good
random strings is bounded by $R\eta /k$. 
Therefore, the expected number of covered nodes is bounded by 
the contribution of the middle layer, plus the contributions from the good and bad random strings:
\[
R + \left(2\cdot2^{-\left(1/3\right)cl}\cdot R\right) \eta +R\eta /k= \left(1/k+o\left(1\right)\right)R\eta \approx \left(\frac{e}{e-1}\cdot \frac{1}{k}\right)OPT\mbox{.}
\]
Lemma~\ref{lem:good-random-strings} below completes of Theorem~\ref{thm:np-hardness}.

\begin{lemma}
\label{lem:good-random-strings}There are at most $\left(2\cdot2^{-\left(1/3\right)cl} \cdot R\right)$
good random strings.\end{lemma}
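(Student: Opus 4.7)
The plan is to prove the lemma via a threshold-based dichotomy on the concentration of $S$ across (question, prover) pairs. Fix $T := 2^{(1/3)cl}$ and for each pair $(q,i)$ let $N(q,i) := |\{a : (q,a,i) \in S\}|$, so that $\sum_{(q,i)} N(q,i) = |S| = kQ$. Call $(q,i)$ \emph{heavy} if $N(q,i) > T$ and \emph{light} otherwise; then the number of heavy pairs is at most $kQ/T$. Any good random string $r$ has a witness: a pair of triples $(q_{i_1},a_{i_1},i_1),(q_{i_2},a_{i_2},i_2) \in S$ where $q_{i_j}$ is the question to prover $i_j$ on $r$ and $a_{i_1},a_{i_2}$ are consistent on their common variables. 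I would split the good random strings according to whether at least one $(q_{i_j},i_j)$ in such a witness is heavy, or both are light.

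In the heavy case I would argue combinatorially: for each fixed $(q_0,i_0)$, exactly $R/Q$ random strings $r$ assign $q_0$ as the question to prover $i_0$ (by the symmetry of Feige's construction), so summing over at most $kQ/T$ heavy pairs gives at most $(kQ/T)(R/Q) = k \cdot 2^{-(1/3)cl} R$ good strings in this case. In the light case I would invoke Theorem~\ref{thm:feige} via a randomized prover strategy derived from $S$: each prover $i$, on question $q$, answers a uniformly random element of $\{a : (q,a,i) \in S\}$ (arbitrarily if this set is empty). Feige's soundness yields, for any fixed pair of provers, that the joint probability---over the uniformly random $r$ and the strategies' internal randomness---of the two provers' answers being consistent on all common variables is at most $2^{-cl}$. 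Conversely, any good $r$ in the light case whose witness lies on provers $(i_1,i_2)$ produces such consistency with conditional probability at least $1/(N(q_{i_1},i_1)\, N(q_{i_2},i_2)) \ge 1/T^2$. Multiplying out and union-bounding over the $\binom{k}{2}$ prover pairs gives at most $\binom{k}{2} \cdot T^2 \cdot 2^{-cl} R = \binom{k}{2} \cdot 2^{-(1/3)cl} R$ good strings in this case.

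Summing the two contributions bounds the number of good random strings by $(k + \binom{k}{2}) \cdot 2^{-(1/3)cl} R$, which matches the stated $2 \cdot 2^{-(1/3)cl} R$ up to a constant factor (absorbable by slightly decreasing $c$, since $k$ is a fixed constant of Feige's construction while $l$ may be taken arbitrarily large). I expect the primary obstacle to be the calibration of the threshold $T$: a smaller $T$ forces too many heavy pairs, while a larger $T$ weakens the $1/T^2$ conversion of Feige's soundness, and the choice $T = 2^{(1/3)cl}$ is precisely what equalizes the two resulting terms to produce the $2^{-(1/3)cl}$ decay in the final bound.
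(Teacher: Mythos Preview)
Your proposal is correct and follows essentially the same approach as the paper: the same threshold $T=2^{(1/3)cl}$, the same heavy/light dichotomy via Markov, and the same $T^2$ blowup when invoking Feige's soundness on the light pairs. The only cosmetic differences are that you phrase the light case via an explicit randomized prover strategy and union-bound over the $\binom{k}{2}$ prover pairs, whereas the paper appeals directly to a multi-answer version of Theorem~\ref{thm:feige}; and you correctly flag that the resulting bound carries a constant factor depending on $k$ that must be absorbed into $c$ --- a point the paper's final summation glosses over.
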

\begin{proof}
Intuitively, any $\left(r,\overline{a_{r}}\right)$ which has more than one
neighbor in $S$ corresponds to an agreement of at least two
provers - and therefore should be a rare event. 
In order to turn this intuition into a proof, 
we must rule out solutions that distribute
the budget in an uneven manner that does not correspond to answers
of honest provers to verifier's questions. 

In expectation,
for each $\left(q,i\right)$ there is only one $\left(q,a,i\right) \in S$.
Therefore by Markov's inequality, 
for at most a $2^{-\left(1/3\right)cl}$-fraction of $\left(q,i\right)$'s,
more than $2^{\left(1/3\right)cl}$ corresponding nodes belong to $S$;
we call those $\left(q,i\right)$'s {\em heavy}, and {\em light} otherwise, i.e., 
\begin{gather}
\Pr_{r}\left[\exists i:\mbox{ \ensuremath{\left(q,i\right)}\,\ is heavy}\right]\leq2^{-\left(1/3\right)cl}\cdot k\mbox{.} \label{eq:markov1}
\end{gather}
We henceforth focus on bounding the number of good random
strings that correspond only to light $\left(q,i\right)$'s.

Consider only $r$'s whose $\left(q,i\right)$'s are light.
For each light $\left(q,i\right)$, 
there are at most $2^{\left(1/3\right)cl}$ nodes $\left(q,a,i\right)$ in $S$. 
In other words, each prover submits at most $2^{\left(1/3\right)cl}$ answers
to each question. 
By Theorem \ref{thm:feige}, if each prover submits only one answer to each question,
the fraction of random strings for which at least one pair agrees is at most $2^{-cl}$;
having $2^{\left(1/3\right)cl}$ answers, the probability that any pair agrees increases by at most $2^{\left(2/3\right)cl}$.
Therefore at most a $2^{-\left(1/3\right)cl}$-fraction
of random strings have at least one pair of agreeing answers. 
Recall that a random string $r$ is {\em good} if for some $\overline{a_{r}}$,
the node $\left(r,\overline{a_{r}}\right)$ has more than one neighbor
$\left(q,a,i\right)$ in $S$.
\begin{gather}
\Pr_{r}\left[\mbox{\ensuremath{r}\ is good and \ensuremath{\forall i: \left(q,i\right)} is light}\right]\leq2^{-\left(1/3\right)cl}\mbox{.}
\end{gather}

Summing with \eqref{eq:markov1}, we have that:
\begin{gather}
\Pr_{r}\left[\mbox{\ensuremath{r}\ is good}\right]\leq2\cdot 2^{-\left(1/3\right)cl}\mbox{.}
\end{gather}
\end{proof}

\section{Algorithm for constant rank weighted bi-adjacency matrix $M$}\label{sec:Constant-rank-adjacency}

The previous sections show that the AIM problem for general bipartite graph $B$ and
social graph $G$ is hard to approximate to within any constant factor.
In this section, we restrict the (weighted) bi-adjacency matrix $\K$ between content provider nodes and consumer nodes to be of constant rank $r$,
and show that for this case we can obtain a constant factor approximation in polynomial
time. We denote this restricted problem {\sc AIM}-$r$. Our main algorithmic result is:

\begin{theorem} \label{thm:main theorem}
For any constant $r>0$ and $\delta, \varepsilon > 0$, {\sc AIM}-$r$ can be approximated to within $(1-1/\e-\varepsilon)^3$ 
	with probability $1-\delta$ and in time polynomial in $n, m, \lambda, 1/\varepsilon, \log (1/\delta)$, where $n=|U|$, $m=|V|$ and $\lambda$ is the maximum number of bits in 
	any entries of matrix $M$.
	\footnote{The running time is exponential in $r$. See Section~\ref{rem:submodular}
	for more details.}
\end{theorem}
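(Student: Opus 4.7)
The plan is to exploit the constant rank of $M$ to reduce the provably-hard combinatorial search over $X$ to a low-dimensional problem, and to solve it by nested submodular maximization that yields three compounded $(1-1/\e-\varepsilon)$ factors.

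First, I would factorize $M = A B^{\top}$ with $A \in \mathbb{R}^{n\times r}$ and $B \in \mathbb{R}^{m \times r}$, so each provider $i$ carries a feature vector $a_i\in\mathbb{R}^r$ and each consumer $y$ a feature vector $b_y\in\mathbb{R}^r$ with $M_{iy}=\langle a_i, b_y\rangle$. The crucial observation is that for every consumer $y$, the activation probability $p_y(X) = 1 - \prod_{i\in X}(1 - \langle a_i, b_y\rangle)$ depends on $X$ only through the empirical distribution of $\{a_i\}_{i\in X}$ in $\mathbb{R}^r$. I would formalize this into a ``sufficient statistic'' lemma: by rounding each $a_i$ to a grid in $\mathbb{R}^r$ at granularity $\varepsilon/\mathrm{poly}(n)$, $U$ is partitioned into $K = \mathrm{poly}(n, 1/\varepsilon)^r$ buckets (polynomially many since $r$ is constant) such that any two subsets $X, X'$ with identical bucket counts satisfy $|\sigma(X,Y) - \sigma(X',Y)| \le \varepsilon\cdot\sigma(X,Y)$ for every $Y$. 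This reduces the problem to optimizing over bucket profiles together with $Y$.

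Rather than enumerating bucket profiles directly (still too many), I would introduce a continuous variable $\alpha\in[0,1]^K$ representing per-bucket sampling probabilities subject to $\sum_k \alpha_k |U_k| \le b_1$, and apply a continuous-greedy procedure to the multilinear extension of the monotone submodular surrogate $\alpha\mapsto\max_Y\sigma(X(\alpha),Y)$. This yields a $(1-1/\e-\varepsilon)$-approximate fractional $\alpha^*$, which is then rounded to an integer $X$ with $|X|=b_1$ via swap/pipage rounding respecting the bucket structure, contributing another $(1-1/\e-\varepsilon)$ factor. For any fixed $X$, $\sigma(X,\cdot)$ is monotone and submodular in $Y$ (as an expectation over a random $Y$-activated subset applied to a monotone submodular spread function), so greedy on $Y$ yields a third $(1-1/\e-\varepsilon)$ factor. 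Since the IC spread is $\#$P-hard to compute exactly, I would use Monte-Carlo simulation (or reverse-reachable sets \textit{\`a la} Borgs et al.) to estimate $\sigma$ to within $(1\pm\varepsilon)$ with probability at least $1-\delta$; this contributes the $\log(1/\delta)$ term to the runtime and is absorbed into the $\varepsilon$ slack via a union bound over the polynomially-many evaluations.

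The main obstacle I expect is proving the sufficient-statistic lemma for the highly non-linear product $\prod_{i\in X}(1-M_{iy})$. Unlike the linear objective of Densest-$k$-Bi-Subgraph, where a single rank-$r$ signature $\sum_{i\in X} a_i$ exactly determines the value, here small per-entry perturbations can compound multiplicatively across $|X|$ factors; naively telescoping loses a $|X|$-factor. Balancing the bucket granularity (which controls per-entry error and hence approximation quality) against the number of buckets $K$ (which controls the dimension of the continuous relaxation and therefore the runtime), while still producing a fractional solution whose bucket-respecting rounding preserves the submodular guarantees, is where the bulk of the technical work lies.
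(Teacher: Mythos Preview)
Your proposal has a genuine gap at the optimization step, and it is missing the paper's key idea at exactly the place you flag as ``the main obstacle.''

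\textbf{The surrogate is not submodular.} You propose running continuous greedy on (the multilinear extension of) $g(X)=\max_{Y}\sigma(X,Y)$. But a pointwise maximum of monotone submodular functions is in general \emph{not} submodular, so continuous greedy does not apply to $g$. Worse, even evaluating $g(X)$ requires solving an influence-maximization instance, for which you only have a $(1-1/\e)$-approximate oracle; continuous greedy with such an oracle does not yield the stated guarantee. (Separately, pipage/swap rounding for a cardinality constraint does not cost an additional $(1-1/\e)$ factor, so your accounting of the three factors is off even if $g$ were submodular.)

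\textbf{The missing idea: a concave relaxation that linearizes the dependence on $X$.} You correctly identify that bucketing the rows $a_i$ runs into multiplicative compounding across the $|X|$ factors of $\prod_{i\in X}(1-M_{iy})$. The paper sidesteps this entirely by replacing each coordinate $f_j(\mathbf{x},\mathbf{y})=y_j\bigl(1-\prod_i(1-x_iM_{ij})\bigr)$ with the concave relaxation $F_j(\mathbf{x},\mathbf{y})=y_j\bigl(1-\e^{-(\mathbf{x}^\top M)_j}\bigr)$, which is sandwiched within a $(1-1/\e)$ factor of $f_j$. Crucially, $F_j$ depends on $\mathbf{x}$ \emph{only through the linear form} $\mathbf{x}^\top M$, which lives in the $r$-dimensional image $\mathrm{Im}\,M$. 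One then builds a multiplicative $(1+\varepsilon)$-net $\mathcal{S}_\varepsilon$ of polynomial size over $\mathrm{Im}\,M$ and simply enumerates: for each net point $\mathbf{s}$, run greedy on $Y$ for the (submodular) relaxed spread $\hat\sigma(\mathbf{s},\cdot)$ to get $\mathbf{y}_{\mathbf{s}}$, then greedy on $X$ for $\sigma(\cdot,\mathbf{y}_{\mathbf{s}})$ to get $\mathbf{x}_{\mathbf{s}}$, and return the best pair. The three $(1-1/\e-\varepsilon)$ factors come from (i) the relaxation gap, (ii) greedy on $Y$, and (iii) greedy on $X$; no joint optimization over $(X,Y)$ is ever attempted, which is precisely what avoids the non-submodularity trap you fell into. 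Once you have this linear sufficient statistic for the relaxed objective, bucketing the rows $a_i$ is unnecessary.
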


For any fixed $X$, $\sigma(X,Y)$ is a monotone submodular function of $Y$.
Similarly, for any fixed $Y$, $\sigma(X,Y)$ is a monotone submodular function of $X$.
Each of those can be (approximately) optimized independently, 
thus the main algorithmic challenge is due to the interaction between the choice of $X$ and the choice of $Y$.
Intuitively, a constant rank bi-adjacency matrix creates an "information bottleneck" which restricts the complexity of this interaction.

How can we use the restriction on the matrix rank to optimize a non-linear objective?
To this end, we introduce in Subsection \ref{sub:relaxation} a relaxation of our objective function
which conveniently views $\K$ as a linear operator acting on $X$.
Because $\K$ has constant rank, the resulting subspace has a constant dimension;
in Subsection \ref{sub:net}, we show that we can efficiently (approximately) enumerate over all the points in this subspace.
Finally, given the (approximately) optimal choice of $X$, 
we can use standard submodular maximization techniques to (approximately) optimize over $Y$ (Subsection \ref{sec:alg}).

\subsection{Notation}

Henceforth, we use the following notational conventions.
For vector $\w x$, $x_i$ is the $i$-th element
of $\w x$. 
All vectors are column vectors (unless otherwise stated).
Let $|U|=n$ and $|V|=m$.
When the context is clear, we also use the index set $[n]=\{1,2,\ldots, n\}$ to represent
$U$ and index set $[m]=\{1,2,\ldots, m \}$ to represent $V$.

Given the provider seed set $X\subseteq U$ and the consumer seed set $Y\subseteq V$,
for convenience
we denote $\w{x}, \w{y}$ as the indicator vectors of $X$ and
$Y$, respectively.
An indicator vector for a subset $X$ of $U$ is a vector in $\{0,1\}^{n}$ such that
the entries corresponding to nodes in $X$ are $1$'s and nodes in $U\setminus X$ are $0$'s,
and indicator vector for subset $Y$ of $V$ is defined similarly.

Given $\w{x}$ and $\w{y}$, we use $f_j\left(\w{x},\w{y}\right)$ to denote the
{\em initial activation probability} of each node $j \in V$, which
is the probability that some node $i\in X$ activates $j$ based on matrix $M$,
i.e., $f_j(\w{x},\w{y})=y_j\left(1-\prod_{i\in [n]}\left(1- x_i\K_{ij}\right)\right)$.
We denote $\mathbf{f}(\mathbf{x},\mathbf{y})$ as the vector $(f_1\left(\w x,\w y\right),\ldots,f_{m}\left(\w x,\w y\right))^{\top}$.
Note that the initial activation of nodes in $Y$ from nodes in $X$ are mutually independent
for every node in $Y$.
Moreover, the initial activation probability of node $j\in V$
is not its {\em final activation probability},
which is the probability that node $j$
is activated by the end of the diffusion process, since $j$ may be later activated by
other nodes in $V$ through the diffusion process in the social graph $G$.
In particular, a node $j\in V\setminus Y$ has zero initial activation probability
by our model definition, but its final activation probability may be greater than zero.


\ignore{

\subsection{The multilinear relaxation}

In general, it is convenient to also consider fractional $\w x, \w y \in \[0,1\]^n$.
A standard technique in submodular optimization is the multilinear relaxation (e.g. \cite{CCPV07-submodular_multilinear}):
we define $F_j(\hat{\w x}, \hat{\w y})$ to be the expectation over $f_j(\hat{\w x}, \hat{\w y})$
where each element $x_i$ and $y_i$ is sampled independently from $\{0,1\}$
 with expectation $\hat{x}_i$ and $\hat{y}_i$, respectively.
Define $F(\hat{\w x}, \hat{\w y})$ analogously.

Similarly, 

Given a probability vector $\p \in [0,1]^m$ and an indicator vector
$\w y \in \{0,1\}^m$,
let $S_{\p, \w y}$ be the (random) set of nodes chosen in $Y$ according to vector $\p$.
That is, each node $j \in Y$ is included in $S_{\p,\w y}$ independently with
probability $\pi_j$, and no node in $V\setminus Y$ is included in $S_{\p,\w y}$.
We use $\f(\p,\w y)$ to denote the expected number
of activated nodes in $V$ when we first select seed set $S_{\p, \w y}$ and then let
the influence propagates in $G$ from $S_{\p, \w y}$.
Thus, according to our model defined in Section~\ref{sec:model}, we have
$\sigma(X,Y) = \f(\mathbf{f}(\w{x},\w{y}), \w y)$,
where $\w{x}$ and $\w{y}$ are the indicator vectors of sets $X$ and $Y$, respectively.

%

In the following two subsections, we prove some useful properties about function
$\f$ before proposing our Algorithm~\ref{alg:main} in Section~\ref{sec:alg}.
}

\subsection{A concave relaxation}
\label{sub:relaxation}

A key step in our algorithm is to approximate every coordinate $f_j(\mathbf{x},\mathbf{y})$
via the following concave relaxation\footnote{
	The relaxation is inspired by \cite{BPRSS}. Essentially
	the same relaxation was also used before by \cite{DRY11}
	in the context of Poisson rounding.}
\[
	F_j(\w{x},\w{y}) = y_{j}\left(1- \e^{-\left(\w x^\top M\right)_j }\right).
\]
Notice that this relaxation has two important features: (a) it is a function of
the linear form $\left(\w x^\top M\right)_j$, which allows us
to use the constant rank condition; and (b) it is both concave in
$\mathbf{x}$ for a fixed $\mathbf{y}$, and concave in $\mathbf{y}$
for a fixed $\mathbf{x}$ --- this will make it much easier to maximize
efficiently. Now we will show that it is a $\left(1-1/\e\right)$-approximation
of $f_j(\mathbf{x},\mathbf{y})$ by the following lemma.

\begin{lemma}
	\label{lemma:concave-approx}
	For any $\w x, \w y \in \{0,1\}^n$,
	\[
	\left(1-1/\e\right)f_j(\mathbf{x},\mathbf{y})\leq F_j(\mathbf{x},\mathbf{y})\leq f_j(\mathbf{x},\mathbf{y}), \forall j=1,\ldots, m.
	\]
\end{lemma}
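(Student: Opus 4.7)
The plan is to reduce the claim to an inequality between two scalar functions on $[0,\infty)$ and then dispatch it with elementary calculus. First, since $y_j\in\{0,1\}$, if $y_j=0$ both $f_j$ and $F_j$ vanish and the inequality is trivial, so assume $y_j=1$. Set $a_i := x_i M_{ij}\in[0,1]$ and $s := (\mathbf{x}^\top M)_j = \sum_{i} a_i$. In this notation the lemma becomes
\[
(1-1/\e)\bigl(1-\textstyle\prod_i(1-a_i)\bigr)\;\le\;1-\e^{-s}\;\le\;1-\textstyle\prod_i(1-a_i).
\]

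For the upper bound, I would apply the standard inequality $1-t\le \e^{-t}$ (valid for all real $t$) coordinatewise: $\prod_i(1-a_i)\le \prod_i \e^{-a_i}=\e^{-s}$, which rearranges to $F_j\le f_j$. For the lower bound, the key auxiliary fact is the \emph{union bound} $f_j=1-\prod_i(1-a_i)\le \min(s,1)$, which is immediate since each factor is at least $1-a_i\ge 0$ and by a one-line induction on $n$ using $\prod(1-a_i)\ge 1-\sum a_i$. With this in hand I would split into two cases:
\begin{itemize}
\item If $s\ge 1$, then $F_j=1-\e^{-s}\ge 1-1/\e$, while $f_j\le 1$, so $(1-1/\e)f_j\le 1-1/\e\le F_j$.
\item If $s<1$, then $f_j\le s$, and it suffices to show $(1-1/\e)s\le 1-\e^{-s}$ on $[0,1]$. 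This is the chord inequality for the concave function $h(s)=1-\e^{-s}$: since $h(0)=0$, $h(1)=1-1/\e$, and $h$ is concave, $h(s)\ge (1-1/\e)s$ throughout $[0,1]$.
\end{itemize}

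There is no real obstacle; the only step that requires a moment of thought is noticing that the trivial bounds $f_j\le s$ and $f_j\le 1$ combine with the two regimes of $1-\e^{-s}$ (linear-like near $0$, saturating above $1$) to give the same factor $(1-1/\e)$ in both cases. This is the classical ``$(1-1/\e)$-loss'' incurred when relaxing a product $1-\prod(1-a_i)$ to its exponential surrogate $1-\e^{-\sum a_i}$, and is precisely what lets us swap the inherently nonlinear $f_j$ for the concave, linear-argument approximation $F_j$ on which the rest of Section~\ref{sec:Constant-rank-adjacency} depends.
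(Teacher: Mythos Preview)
Your proof is correct. The upper bound $F_j\le f_j$ is handled identically to the paper (apply $1-t\le \e^{-t}$ factor by factor). The lower bound, however, takes a genuinely different route. The paper argues by a telescoping identity:
\[
(1-1/\e)\Bigl(1-\prod_i(1-a_i)\Bigr)=\sum_i\bigl((1-1/\e)a_i\bigr)\prod_{k<i}(1-a_k),
\]
then applies the scalar bound $(1-1/\e)a\le 1-\e^{-a}$ to each summand and $1-a_k\le \e^{-a_k}$ to each factor, collapsing the telescoping sum to $1-\e^{-s}$. You instead use the union bound $1-\prod_i(1-a_i)\le\min(s,1)$ and a case split on $s\lessgtr 1$, invoking concavity of $1-\e^{-s}$ once on $[0,1]$. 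Your argument is a bit more elementary and makes the origin of the $(1-1/\e)$ factor very transparent (it is exactly the slope of the chord of $1-\e^{-s}$ over $[0,1]$). The paper's telescoping proof avoids the case split and never needs the Weierstrass product inequality, trading that for a slightly more algebraic manipulation. Both are short and complete; neither generalizes more broadly than the other here since $a_i\in[0,1]$ is essential to both.
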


\begin{proof}
	For the right inequality, since $\e^{-a} \geq 1 - a$ for any real $a$, we get
	\[
	F_j(\mathbf{x},\mathbf{y}) = y_j \left( 1-\e^{- \sum_{i\in [n]} x_i{\K}_{ij}} \right)
	\leq y_j \left( 1- \prod_{i\in [n]} (1 - x_i{\K}_{ij})\right) = f_j(\mathbf{x},\mathbf{y}).
	\]

	For the left inequality, because $(1-\e^{-1}) a \leq 1- \e^{-a}$ holds for any $a \in [0,1]$, we have
	\begin{align*}
		\left(1-\e^{-1}\right)\left(1-\prod_{i=1}^{n} \left(1 - x_i\K_{ij}\right)\right)
		&= \sum_{i=1}^{n} \left(\left(1-\e^{-1}\right) x_i\K_{ij} \right) \prod_{k=1}^{i-1} \left(1-x_k\K_{kj}\right)\\
		&\leq \sum_{i=1}^{n}\left(1-\e^{-x_i\K_{ij}}\right) \cdot \prod_{k=1}^{i-1} \e^{- x_k\K_{kj}}\\
		&= \sum_{i=1}^{n} \left(\exp\left(-\sum_{k=1}^{i-1} x_k\K_{kj}\right)- \exp\left(- \sum_{k=1}^{i} x_k\K_{kj}\right)\right)\\
		&=1-\e^{- \sum_{i \in [n]} x_i\K_{ij}}\\
		&=1-\e^{- \left(\w x^\top M\right)_j }.
	\end{align*}
	Multiplying $y_j$ on both sides of the above inequality, we get
	$\left(1-1/\e\right)f_j(\mathbf{x},\mathbf{y})\leq F_j(\mathbf{x},\mathbf{y})$.
\end{proof}

\subsection{Approximating initial activation probability $\w {f} (\w x,\w {y})$ via
	$(1+\varepsilon)$-net construction}
\label{sub:net}

Notice that the value of $F_j(\w x, \w y)$ is uniquely determined by $\w x^\top M$ and $\w y$.
We use $\Ima \K$ to denote the image of $\K$ when $\K$ is treated as a linear operator from $\{0,1\}^n$ to $\mathbb{R}^m$, 
i.e. $\Ima \K$ is the subspace of all vectors $\mathbf{x}^{\top}\K \in \mathbb{R}^m$, $\forall \w x \in \{0,1\}^n$.
Recall that since $\K$ has constant rank $r$, the dimension of $\Ima \K$ is also $r$.

Our next goal is to enumerate over (approximately) all feasible $\w s \in \Ima \K$.
For any $\varepsilon > 0$, we say that set $\S \subseteq \mathbb{R}^{\u}$ is a {\em multiplicative-$(1+\varepsilon)$-net} for $M$,
	if for every $\w x \in \{0,1\}^n$,
there exists a corresponding point
$\mathbf{s} = (s_1, s_2, \cdots, s_\u)^\top \in{\cal S}_\varepsilon$
such that for each coordinate $j \in \left[\u\right]$,
\begin{equation}
\label{eq:net}
	s_j  \leq \left(\mathbf{x}^{\top}\K\right)_j \leq s_j (1+\varepsilon).
\end{equation}
Henceforth we drop the multiplicative qualification and simply call such a set a $(1+\varepsilon)$-net.

\begin{lemma}\label{lemma:1-plus-eps-approx-2}
%
Let $M \in [0,1]^{n \times m}$ be a matrix with constant rank $r$, and whose entries can be represented with $\b$ bits.
Then for any error parameter $\varepsilon > 0$, 
we can output a polynomial-size $(1+\varepsilon)$-net for $M$ in time $\poly\left( n, m, 1/\varepsilon, \b \right)$.
\end{lemma}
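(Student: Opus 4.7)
The plan is to parameterize the image of the map $\w x \mapsto \w x^\top M$ using at most $r$ rows of $M$ at a time via Caratheodory's theorem for conic hulls, and then multiplicatively discretize the (nonnegative) coefficients in these decompositions. The starting observation is that because $M$ is entrywise nonnegative, every vector $\w x^\top M$ for $\w x \in \{0,1\}^n$ lies in the conic hull of the $n$ rows $M_{1\cdot},\ldots,M_{n\cdot}$; this cone sits inside the $r$-dimensional row space of $M$, so Caratheodory's theorem for cones (applied within that $r$-dimensional subspace) guarantees $\w x^\top M = \sum_{i \in S^*} \alpha_i^* M_{i\cdot}$ for some $S^* \subseteq [n]$ with $|S^*| \le r$ and $\alpha_i^* \ge 0$. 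This reduces the task from ranging over $2^n$ distinct $\w x$'s to ranging over $O(n^r)$ subsets together with bounded $r$-dimensional grids of coefficients.

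Concretely, I would build the net as
\[
\S \;:=\; \Bigl\{ \sum_{i \in S} \tilde{\alpha}_i \, M_{i\cdot} \;:\; S \subseteq [n],\; |S| = r,\; (\tilde\alpha_i)_{i \in S} \in \mathcal{G}^r \Bigr\},
\]
where $\mathcal{G} = \{0\} \cup \bigl\{\alpha_{\min}(1+\varepsilon)^k : 0 \le k \le K \bigr\}$ for appropriately chosen $\alpha_{\min}$ and $K$. For correctness, fix $\w x \in \{0,1\}^n$, take its Caratheodory decomposition $(S^*, \alpha^*)$, pad $S^*$ with dummy indices carrying coefficient $0$ up to size $r$, and round each $\alpha_i^*$ down to the nearest grid value $\tilde\alpha_i$ so that $\tilde\alpha_i \le \alpha_i^* \le (1+\varepsilon)\tilde\alpha_i$. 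Then, using entrywise nonnegativity of $M$, coordinate-wise
\[
s_j := \sum_{i \in S^*} \tilde\alpha_i M_{ij} \;\le\; \sum_{i \in S^*} \alpha_i^* M_{ij} \;=\; (\w x^\top M)_j \;\le\; (1+\varepsilon)\, s_j,
\]
which is exactly the $(1+\varepsilon)$-net guarantee required by \eqref{eq:net}.

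The main technical step is bounding $\alpha_{\min}$ and $\alpha_{\max}$ so that $|\mathcal{G}|$ stays polynomial. The upper bound is immediate: every coordinate of $\w x^\top M$ is at most $n$, and every nonzero entry of $M$ is at least $2^{-\lambda}$, so $\alpha_i^* \cdot \max_j M_{ij} \le n$ forces $\alpha_i^* \le n \cdot 2^\lambda$. The delicate step is lower-bounding the nonzero $\alpha_i^*$'s: I would restrict the identity $\w x^\top M = \sum_{i \in S^*} \alpha_i^* M_{i\cdot}$ to $r$ well-chosen linearly independent columns of $M$, turning it into an $r \times r$ linear system, and apply Cramer's rule. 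A Hadamard-type bound on $r \times r$ determinants with $\lambda$-bit rational entries then gives $\alpha_i^* \ge 2^{-O(r\lambda)}$ whenever nonzero. Consequently each grid axis needs only $O((r\lambda + \log n)/\varepsilon)$ points, and $|\S| \le O\!\bigl(n^r \cdot ((r\lambda + \log n)/\varepsilon)^r\bigr)$, which is polynomial in $n, m, \lambda, 1/\varepsilon$ for constant $r$. I expect the Cramer's-rule lower bound on $\alpha_{\min}$, together with the careful choice of pivot columns needed to invoke it, to be the main obstacle; once that is in place, the subset enumeration and coordinate-wise rounding are routine.
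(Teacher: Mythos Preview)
Your approach is correct and genuinely different from the paper's. The paper works in the \emph{output} space: it covers $[0,n]^m$ by multiplicative hyper-rectangles in each coordinate, observes that every nonempty intersection of such a rectangle with the $r$-dimensional subspace $\Ima M$ has a vertex where $r$ of the output coordinates are pinned to grid values, and then enumerates over $\binom{m}{r}$ choices of linearly independent \emph{columns} together with $r$-tuples of grid values, solving an $r\times r$ linear system each time. This yields a two-sided (``weak'') net, which is then rescaled to obtain the one-sided guarantee. Your approach instead works in the \emph{input} space: Carath\'eodory for cones lets you replace $\w x^\top M$ by a conic combination of at most $r$ linearly independent \emph{rows}, and you discretize the $r$ coefficients multiplicatively. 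The Cramer's-rule lower bound on the nonzero coefficients goes through exactly as you sketch (the chosen rows are linearly independent by Carath\'eodory, so one can restrict to $r'\le r$ columns giving an invertible $r'\times r'$ system with $\lambda$-bit rational entries).

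Each route has a small advantage. Yours gives the one-sided error directly, because rounding coefficients down and using $M\ge 0$ entrywise preserves the coordinate-wise inequality; the paper needs the extra weak-net-then-divide step. On the other hand, the paper's construction enumerates over $m^r$ column subsets rather than your $n^r$ row subsets, and (as noted in a footnote there) is more robust to negative entries in $M$, whereas your monotonicity step relies essentially on $M\ge 0$. For the stated lemma with $M\in[0,1]^{n\times m}$ and constant $r$, both arguments give polynomial size and time.
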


\begin{proof}
	Below, we show how to construct a {\em weak $(1+\varepsilon)$-net} (in Algorithm \ref{alg:net}), 
	which instead of Equation \eqref{eq:net} gives the following weaker, 
	two-sided error guarantee:
\begin{equation}
\label{eq:weak-net}
	s_j  / (1+\varepsilon) \leq \left(\mathbf{x}^{\top}\K\right)_j \leq s_j (1+\varepsilon).
\end{equation}
	Given such an algorithm, one can construct a $(1+\varepsilon)$-net (with one-sided error)
	by constructing a weak $\sqrt{1+\varepsilon}$-net,
	and dividing every entry in the obtained weak $\sqrt{1+\varepsilon}$-net by  $\sqrt{1+\varepsilon}$.

	Since each entry of $M \in [0,1]^{n \times m}$ has at most $\b$ bits, then
	for any $\w{x} \in \{0,1\}^n$, every nonzero entry of $\w{x}^{\top}\K$
	is bounded in $[2^{-\b}, n]$.%
	\footnote{This step assumes of course that all the entries in $\K$ are positive.
		While this is a natural for an adjacency matrix of a social network,
		when taking the low-rank approximation of one, this may no longer
		be true. Nevertheless, a similar analysis continues to hold even when
		$\K$ has negative entries.%
	}
	Thus in each dimension, we lose no more than a $\left(1+\varepsilon\right)$-factor
	by only considering each position in
	$S_{\b}=\left\{ 0,2^{-\b},2^{-\b}\cdot\left(1+\varepsilon\right),2^{-\b}\cdot\left(1+\varepsilon\right)^{2},\dots,
	n \right\}$.
	
	We consider the partitioning $\left[0,n\right]^{\u}$ into hyper-rectangles that
	is induced by $\underbrace{S_{\b} \times \dots \times S_{\b}}_{\text{$m$ times}}$. More precisely, the set of hyper-rectangles $\mathcal{H}$
	is every possible direct product of intervals, i.e.,
	\begin{gather}
\label{eq:hyper-rectangles}
\mathcal{H} = \left\{
{ H = [a_1,b_1]\times [a_2,b_2] \times \cdots \times [a_m,b_m]:} \atop
{\forall i \in [m], \forall a_i \in S_\b \setminus \{n\}, b_i = \min\{\max\{(1+\varepsilon)a_i, 2^{-\b}\},n\}} 
\right\}. 
\end{gather}
	Observe that the disjoint union of those hyper-rectangles covers $\left[0,n\right]^{\u}$;
	in particular, for any $\w x\in\{0,1\}^n$, $\mathbf{x}^{\top}\K$ must belong to a hyper-rectangle $H \in \mathcal{H}$.  

	Notice that if $\mathbf{x}^{\top}\K$ and $\w s$ lie in the same hyper-rectangle $H$,
	then they satisfy the two-sided error approximation guarantee of Equation \eqref{eq:weak-net}.
	Our weak $(1+\varepsilon)$-net has at least one such $\w s$ for every $\mathbf{x}^{\top}\K$.

\ignore{
	Given any point $\mathbf{x}^{\top}\K$ inside a hyper-rectangle $H$,
	we can round down all its entries to the nearest value from $S_{\b}$.
	(This corresponds to $\w a = \left(a_1,\dots , a_m\right)$ in Equation \eqref{eq:hyper-rectangles}.)
	We henceforth denote this point $\w a_H$ and call it the {\em bottom corner} of $H$.
	Notice that $\w a_H$ is an approximation of $\mathbf{x}^{\top}\K$ in the sense of Inequality~\eqref{eq:net}.
	Our $(1+\varepsilon)$-net will consist of all $\w a_H$ for hyper-rectangles $H$ that have a non-empty intersection with $\Ima \K$.
}

	Consider any hyper-rectangle $H \in \cal H$ with a non-empty intersection with  $\Ima \K$,
	and let $I_H = H \cap \Ima \K$ denote their intersection.
	Since $I_H$ is an intersection of convex polytopes, it is also a convex polytope,
	defined by $m-r$ linearly independent equations that define $\Ima \K$
	and $2m$ inequalities that define $H$. 
	Every vertex $\w v$ of this polytope must lie on the intersection of $m$ linearly independent constraints: 
	$m-r$ equations and $r$ inequalities.
	In other words, $\w v$ lies in the intersection of at least $r$ facets of $H$;
	i.e. there exist $i_1, \dots , i_r \in [m]$ such that $v_{i_k} \in \{a_{i_k}, b_{i_k}\} \;\;\forall k\in [r]$.
	Furthermore, these $r$ coordinates must correspond to linearly independent columns of $\K$.

%
	Therefore, the following polynomial-time procedure (summarized in Algorithm \ref{alg:net}) 
	correctly construct a weak $(1+\varepsilon)$-net:
	First we take a submatrix $M'$ of $M$ with $r$ linearly independent rows, so $\Ima M'=\Ima M$ (Line~\ref{alg:submatrix}). Next
	enumerate over all $m\choose r$ $r$-tuples of linearly independent columns of $\K'$ 
	(Line \ref{alg:net-line-for-1}). 
	For each $r$-tuple, enumerate over all vectors in $\underbrace{S_{\b} \times \dots \times S_{\b}}_{\text{$r$ times}}$
	(Line \ref{alg:net-line-for-2}). 
	Each such vector uniquely defines a point in $\w s \in H \cap \Ima \K$
	(Line \ref{alg:net-Gaussian}).
	Finally, adding it to $\S$ guarantees that we can approximate any other $\w x^{\top}\K  \in H$
	(Line \ref{alg:net-add-s}). 
\end{proof}

\begin{algorithm}[t]
	\caption{Construct a weak $\left(1+\varepsilon\right)$-net over the image of $\K$}
	\label{alg:net}
	\LinesNumbered
	
	\KwIn{The matrix $\K$ of rank $r$, and the error factor $\varepsilon$}
	\KwOut{The multiplicative $\left(1+\varepsilon\right)$-net \S.}

	Let $\b$ be the maximum number of bits in any entry of $\K$, and
		$S_{\b} = \left\{ 0, 	2^{-\b}, 
							2^{-\b}\cdot\left(1+\varepsilon\right), 
							2^{-\b}\cdot\left(1+\varepsilon\right)^{2}, \dots, n \right\}$ 
\\
	$\S \gets \emptyset$\\
	Let $M'$ be the $r\times m$ submatrix of $M$ with $r$ independent row vectors $\w v_1,\ldots,\w v_r \in [0,1]^m$ \label{alg:submatrix}\\

	\For{$\left( i_1,i_2,\ldots, i_r \in [m] \text{ s.t. } i_1<i_2<\cdots<i_r \atop \text{and the corresponding columns of $\K'$ are independent} \right) $} 
	{	\label{alg:net-line-for-1}
		\tcc{Enumerate every $r$ coordinates according to the grid}
		\For{$k_1,\ldots, k_r \in S_\b$}{	\label{alg:net-line-for-2}
			Construct linear system of $r$ equations: 
			$\left\{
			\begin{aligned}
				(\hat z_1\w v_1 + \cdots + \hat z_r \w v_r)_{i_1} & = & k_1 \\
				(\hat z_1\w v_1 + \cdots + \hat z_r \w v_r)_{i_2} & = & k_2 \\
				&\vdots&\\
				(\hat z_1\w v_1 + \cdots + \hat z_r \w v_r)_{i_r} & = & k_r
			\end{aligned}
			\right.
			$\\
			\label{alg:net-line-for-3}
			\tcc{ $\hat z_1\w v_1 + \cdots + \hat z_r \w v_r$ is an $m$-dimensional vector, and denote
				$(\hat z_1\w v_1 + \cdots + \hat z_r \w v_r)_i$ as its $i$-th coordinate for each $i \in [m]$ }
			
			Use Gaussian elimination to derive the solution $(\hat z_1', \ldots, \hat z_r')^\top \in \mathbb{R}^r$ 
			\label{alg:net-Gaussian}				\\
			\label{alg:net-add-s} 
			$\w s \gets \hat z_1'\w v_1 + \cdots + \hat z_r' \w v_r$;
			$ \S \gets \S \cup \{\w s\}$
			\\
\ignore{			\For{ $j \in [m]$} 
			{	$s_j \gets \max\{ t \in S_{\b} \wedge t \le s'_j\}$ \tcc{Find the bottom corner} \label{alg:net-bottom} }}
			
		}
	}
	\KwRet{ \S }
\end{algorithm}

We define 
\begin{equation}
	\hat F_j(\w s, \w y) = y_{j}\left(1- \e^{-s_j }\right), \label{eq:fhat}
\end{equation}
thus $\hat F_j(\w s, \w y) = F_j(\w x, \w y)$ for all $\w s = \w x^\top M$.
Notice that
 the definition of $\hat F_j(\w s, \w y)$ naturally extends also to $\w s \notin \Ima \K$.
In the following lemma we relate $\hat F_j$ to the original $f_j$.


\begin{lemma}\label{lemma:approximation-vector}
Let $\varepsilon>0$, and let $\S$ be a $(1+\varepsilon)$-net for $\K$.
Then for every $\w x \in \{0,1\}^n$, there exists an $\w{s} \in \S$
such that for every $\w y\in \{0,1\}^m$ and $j \in [m]$, 

	\begin{align*}
		& (1-1/\e-\varepsilon) f_j(\w{x},\w{y}) \le \hF_j(\w s,\w y) \le f_j(\w{x},\w {y}).
	\end{align*}
\end{lemma}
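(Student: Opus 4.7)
The plan is to fix an arbitrary $\w x \in \{0,1\}^n$, invoke Lemma~2 (the existence of $\w s \in \S$ satisfying the net guarantee), and then argue the two inequalities coordinate-wise. Since both $\hF_j(\w s, \w y) = y_j(1 - \e^{-s_j})$ and the original $f_j(\w x, \w y)$ are zero when $y_j = 0$, it suffices to treat $y_j = 1$ and reduce to comparing $1 - \e^{-s_j}$ with $1 - \prod_i (1 - x_i M_{ij})$ using the intermediate quantity $F_j(\w x, \w y)$ from the concave relaxation of Lemma~1.

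For the upper bound I would simply chain monotonicity with Lemma~1: since $s_j \leq (\w x^\top M)_j$, we get $\hF_j(\w s, \w y) = y_j(1 - \e^{-s_j}) \leq y_j(1 - \e^{-(\w x^\top M)_j}) = F_j(\w x, \w y) \leq f_j(\w x, \w y)$. This half is essentially free and is the reason the lemma requires the stronger (one-sided) net of Equation \eqref{eq:net} rather than only the weak two-sided net.

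The lower bound is where the real work lies. Writing $a := (\w x^\top M)_j$, the net guarantees $s_j \geq a/(1+\varepsilon)$, and I would like to conclude that $1 - \e^{-s_j} \geq \tfrac{1}{1+\varepsilon}\,(1 - \e^{-a})$, which combined with Lemma~1 gives $\hF_j(\w s, \w y) \geq \tfrac{1-1/\e}{1+\varepsilon}\, f_j(\w x, \w y)$, and then a one-line check shows $(1-1/\e)/(1+\varepsilon) \geq 1 - 1/\e - \varepsilon$. The main obstacle — and really the only nontrivial step — is the elementary but delicate claim that
\[
\frac{1 - \e^{-a/(1+\varepsilon)}}{1 - \e^{-a}} \;\geq\; \frac{1}{1+\varepsilon} \qquad \text{for all } a \geq 0.
\]
I would prove this by showing that the ratio $g(a) := (1 - \e^{-ta})/(1 - \e^{-a})$ with $t = 1/(1+\varepsilon) \in (0,1)$ is increasing in $a$ on $(0,\infty)$ — for instance by substituting $u = \e^{-a}$ and analyzing $h(u) = (1 - u^t)/(1 - u)$, or more directly by noting that $\tfrac{d}{da}\log(1-\e^{-a})$ is decreasing in $a$. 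Either way one obtains $\inf_{a > 0} g(a) = \lim_{a \downarrow 0} g(a) = t$, which is exactly what is needed.

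Putting the three steps together — pick $\w s$ from the net, apply the trivial upper bound, and combine the ratio estimate with Lemma~1 for the lower bound — yields the stated sandwich inequality uniformly in $\w y$ and $j$, since every estimate above is pointwise in $j$ and linear in $y_j$.
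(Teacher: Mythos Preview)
Your proposal is correct and follows essentially the same route as the paper: pick $\w s$ from the net, get the upper bound via $\hat F_j \le F_j \le f_j$ using monotonicity and Lemma~\ref{lemma:concave-approx}, and for the lower bound chain $\hat F_j \ge F_j/(1+\varepsilon) \ge (1-1/\e-\varepsilon) f_j$ after establishing the elementary inequality $(1+\varepsilon)(1-\e^{-s_j}) \ge 1-\e^{-(1+\varepsilon)s_j}$. The only difference is that the paper dispatches this last inequality in one line via weighted AM--GM (taking $a=\varepsilon$, $x=1$, $b=1$, $y=\e^{-(1+\varepsilon)s_j}$) rather than your monotonicity-of-the-ratio argument; both work, though an even shorter route is to note that $c\mapsto 1-\e^{-c}$ is concave through the origin, so $1-\e^{-tc}\ge t(1-\e^{-c})$ for $t\in[0,1]$.
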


\begin{proof}
	By Lemma \ref{lemma:1-plus-eps-approx-2}, for any $\w x \in \{0,1\}^n$, we know that there exists $\w{s} \in \S$ satisfying Inequality~\eqref{eq:net}. 
	Moreover, for any $\w{y}$, we claim that
	\begin{align}	
	\hat{F}_j(\w{s},\w{y})
		\geq F_j(\w{x},\w{y}) / \left(1+\varepsilon\right)
		\geq \left(1-1/\e-\varepsilon\right)f_j\left(\mathbf{x},\mathbf{y}\right),
		\label{eq:lower-bound-of-F-with-optimal-value}
	\end{align}
	where the second inequality is due to Lemma~\ref{lemma:concave-approx} and the fact
	that $(1-1/\e)/(1+\varepsilon) \ge (1-1/\e -\varepsilon)$ for all $\varepsilon >0$.
	Recall that $ F_j(\mathbf{x},\mathbf{y}) = 1 - \e^{- y_j \cdot \left(\w x^\top M\right)_j } $,
	thus to show the first inequality of Eq.~\eqref{eq:lower-bound-of-F-with-optimal-value},
	we only need to show that,
	\[
		1-\e^{-s_{j} y_j}
			\geq \left(1 - \e^{-y_j \cdot \left(\w x^\top M\right)_j } \right) / \left(1+\varepsilon\right).
	\]
	For $y_j = 0$, it is trivial. For $y_j = 1$, since $(1+\varepsilon) s_{j} \geq \left(\w x^\top M\right)_j$, we have
	$\e^{- (1+\varepsilon) s_{j} } \leq \e^{- \left(\w x^\top M\right)_j }$.
	Therefore, it is enough to show that
	$
	\left(1-\e^{-s_{j} }\right)
		\geq \left(1-\e^{- (1+\varepsilon) s_{j} }\right) /\left(1+\varepsilon\right),
	$
	namely,
	\[
		\frac{ \varepsilon+ \e^{-(1+\varepsilon) s_{j}} }{1+\varepsilon} \geq \e^{- s_{j}}.
	\]
	Note that the above inequality holds due to Weighted AM-GM inequality $\frac{ax+by}{a+b} \geq x^{\frac{a}{a+b}}y^{\frac{b}{a+b}}$,
	$\forall x, y, a, b \in \mathbb{R}^{+}$ by letting $a=\varepsilon, x=1,b=1,y=\e^{-(1+\varepsilon) s_j}$.

	Furthermore, we claim that
	\[
	f_j\left(\w x,\w y \right)
		\geq F_j\left(\w x,\w y \right)
		\geq \hF_j\left(\w s, \w y\right),
	\]
	where the first inequality is due to Lemma~\ref{lemma:concave-approx}, and the second one follows from Equation~\ref{eq:net}.
\ignore{
Therefore we only need to prove 
	$ F_j\left(\w x,\w y \right)
	\geq \frac{1}{1+\varepsilon}\hF_j\left(\w s, \w y\right)$.
	Similarly, it is enough to show that, for each $j$,
	\[
	1 - \e^{-y_j \cdot \left(\w x^\top M\right)_j }
		\geq \left(1-\e^{-s_{j} y_j}\right) / \left(1+\varepsilon\right).
	\]
	The above holds trivially for $y_j = 0$, while for $y_j=1$ it can be derived by following $\left(\w x^\top M\right)_j \geq s_{j} / \left(1+\varepsilon \right)$
	and the symmetric analysis. Therefore, by concatenating inequalities we can conclude that, for any $\w y\in\{0,1\}^m$ and any $j \in [m]$,
	\[ 
		f_j(\w{x},\w {y}) \geq \frac{1}{1+\varepsilon} \hF_j(\w s,\w y). 
	\]
}
\end{proof}

\ignore{

\subsection{Properties of function $\f(\cdot,\cdot)$}
%

In this subsection, we prove that function $\f(\p, \w y)$, defined on IC model,  has four properties in Lemma \ref{lemma:properties}. 



Let $\w 0$ and $\w 1$ denote the vectors of all zeros and ones, respectively.
As a convention, we treat any vector function $g(\w a)$ where $\w a$ is an
indicator vector of some set $A$ as the same as a set function $g(A)$.
Thus, when we say that a vector function $g(\w a)$ is monotone and
submodular on the indicator vector $\w a$, 
we mean that $g(A) \triangleq g(\w a)$ is monotone and submodular on $A$.
Moreover,  for $\p,\p' \in [0,1]^m$, we define $ \p\leq \p'$ if $\forall i\in[m]: \pi_i\leq \pi'_i$.
We say that a vector function $g:[0,1]^m\rightarrow R$ is monotone if for any  $\p,\p'\in[0,1]^m$ 
	such that $\p\leq \p'$ we have that $g(\p)\leq g(\p')$.\footnote{In this paper, "monotone"
	always means monotonically non-decreasing.}
For convenience, we write $ \p \odot  \w y \triangleq (\pi_1 y_1,\ldots,\pi_m y_m)^T$ to denote the operation of multipling vectors $\w y= (y_1,\ldots, y_m)^T$ and $\p=(\pi_1,\ldots,\pi_m)^T$  piece-wise on every
	coordinate.

\begin{lemma}\label{lemma:properties}
The function  $\f(\p, \w y)$ has the following four properties:
	\begin{enumerate}
		\item \label{clm:monotone} 
		For any fixed $\w y\in \{0,1\}^m$, $\f(\p,\w y)$ is a monotone function on $\p$.
		
		\item \label{clm:submodular-for-u}
		For any fixed 
		$\w{y}\in\{0,1\}^{m}$, 
		$\f(\w f(\w x, \w y),\w y)$ is a monotone and submodular function on $\w x$.
		
		\item \label{clm:submodular-for-v}
		For any fixed $\p \in [0,1]^m$, $\f( \p \odot \w y,\w y) $ is a monotone and submodular function on $\w y$.
		
		%
		
		\item \label{clm:linear}
		 For any constant $\alpha\in[0,1]$, $\f(\alpha \p, \w y)\geq\alpha \f(\p, \w y)$.

	\end{enumerate}
\end{lemma}

\begin{proof}[(Sketch)]
	Denote $\U(Y) = \f(\w 1, \w y)$ where $Y$ is the same subset of indicator vector $\w y$.
	Let $\w s_{\p,\w y}$ denote the indicator vector of the (random) set
	$S_{\p,\w y}\subseteq Y$ where each element in $Y$ is sampled independently according to the 
	probability vector $\p$, and no element in $V\setminus Y$ is sampled.
	Thus
	\begin{equation}
		\f(\p, \w y)=\E {\f( \w 1,\w s_{\p,\w y})}=\E{\U(S_{\p,\w y})}, \label{eq:formulataion-of-f}
	\end{equation}
	where the expectation is taken from all random selections of set $S_{\p,\w y}$.

	\paragraph{Property~(1)} It is straightforward to see the function $\f(\p,\w y)$ is monotone on $\p$.
	
	\paragraph{Property~(2) \& (3)} 	
	Note that, by \cite{kempe2003maximizing}, it is proved that for the IC model, the resulting influence function $\f(\w 1, \w y)$ is non-negative,
	monotone and submodular on $\w y$. Thus to prove property (2), (3), we only need to show the functions $\f(\w f(\w x, \w y),\w y)$, $\f( \p \odot \w y,\w y)$ are resulting influence function of some IC models on $\w x, \w y$, respectively. 
	
	For function $\f(\w f(\w x,\w y), \w y)$, convert the social graph $G$ to new graph $G'$ by deleting all edges from the nodes in $U$ to the nodes in the set $V\setminus Y$. Then for fixed $\w y \in \{0,1\}^m$, $\f(\w f(\w x,\w y), \w y)$ is the influence function of $G'$ of selecting nodes in $U$ according to $\w x$, which is the influence function of standard IC model.
	
	For function $\f( \p \odot \w y,\w y) $,  construct graph $G''$ by adding the mirror node $v_i'$ according to each node $v_i$ in $G$ and connecting $v_i',v_i$ with an edge with probability $\pi_i$. It is easy to see that the function $\f( \p \odot \w y,\w y) $ computes the resulting influence  of $G''$ after choose mirror nodes according to $\w y$, which is also the influence function of  standard IC model.

	\paragraph{Property~(4)}
	
	As is implied by Eq.~\eqref{eq:formulataion-of-f},
	we only need to show that
	\begin{align}
		\E{\U(S_{\alpha \p, \w y })}\geq \alpha \E{\U(S_{\p, \w y})}. \label{eq:properties-3-to-prove}
	\end{align}
	
	Consider the random process that sample the set $S_{\alpha \p, \w y}$ according to the probability vector $\alpha \p$.
	It is equivalent to the sampling of the set $S_{\p, \w y}$ from $Y$ according to
	$\p$, followed by another sampling of its subset $S_{\alpha \w 1, \w s_{\p, \w y}}$ from $S_{\p, \w y}$
	according to $\alpha \w 1$ independently. That is,
	$$
	\E{\U(S_{\alpha \p, \w y })} = \E{\E{\U(S_{\alpha \w 1, \w s_{\p,\w y}})
			\mid S_{\p, \w y} }},
	$$
	where the inner expectation on the right-hand side is taken over random vector
	$\w s_{\alpha \w 1, \w s_{\p,\w y}}$ for any fixed $\w s_{\p,\w y}$,
	and the outer expectation is taken over
	random vector $\w s_{\p, \w y }$.
	
	Thus it is enough to prove that, for any fixed set $A \subseteq Y$ with indicator
	vector $\w a$,
	\begin{equation}
		\E{\U(S_{\alpha \w 1, \w s_{\p,\w y}})
			\mid S_{\p, \w y} = A} = \E{\U(S_{\alpha \w 1, \w a})}
		\geq \alpha \U(A). \label{eq:submodularconvex}
	\end{equation}

	According to Lemma 2.2 of \cite{vondrak2007submodularity}, we have for any full set $\Omega$ and
	any submodular function $\phi$ on $\Omega$, if $\w s_{\alpha \w 1, \w 1 }$
	denotes the random indicator vector for $\Omega$ where each element of $\Omega$ is
	selected independently with probability $\alpha$, then
	$$\E{\U(S_{\alpha \w 1, \w 1 })}\geq \alpha \U(\Omega)+(1-\alpha) \U(\emptyset).$$
	Inequality~\eqref{eq:submodularconvex} is immediate from above when we treat set $A$
	as the full set $\Omega$, and the fact that $\phi(\emptyset) \ge 0$.
\end{proof}


%
}

Analogously to $\hat F_j(\w s, \w y)$, we can also define $\hat \sigma(\w s, \w y)$ 
to be the expected number of (eventually) activated consumer nodes 
given that the set of {\em initially} activated consumer nodes is distributed according to $\hat{\w F}(\w s, \w y)$.
(I.e. each node $j \in V$ is independently initially activated with probability $\hat F_j(\w s, \w y)$.)
For any fixed $\w s$, by the definition of $\hat F_j(\w s, \w y)$ in Eq.~\eqref{eq:fhat}, we know that
	$\hat \sigma(\w s, \w y)$ is equivalent to the influence spread obtained by selecting seed set $Y$
	(indiated by $\w y$),  each seed $j\in Y$ being activated with probability $1- \e^{-s_j }$ and then
	influence propagated in the social graph $G$.
Then, by the result in \cite{kempe2003maximizing}, it is straightforward to see that $\hat \sigma(\w s, \w y)$ is monotone and submodular on $\w y$. \footnote
{For fixed $\w s$, when we say that a vector function $\hat \sigma(\w s, \w y)$ is monotone and
	submodular on the indicator vector $\w y$,
	we mean that $\hat \sigma (\w s, Y) \triangleq \hat \sigma(\w s, \w y)$ is monotone and submodular on 
		set $Y$ indicated by $\w y$.}
	

Furthermore, our $(1+\varepsilon)$-net continues to (approximately) capture all possible inputs to $\hat \sigma$:

\begin{lemma}\label{lemma:approximation-sigma}
Let $\varepsilon>0$, and let $\S$ be a $(1+\varepsilon)$-net for $\K$.
Then for every $\w x \in \{0,1\}^n$, there exists an $\w{s} \in \S$
such that for every $\w y\in \{0,1\}^m$, 
	\begin{align*}
		& (1-1/\e-\varepsilon) \sigma(\w{x},\w{y}) \le \hat \sigma(\w s,\w y) \le \sigma(\w{x},\w {y}).
	\end{align*}
\end{lemma}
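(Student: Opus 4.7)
The plan is to lift the coordinatewise approximation of Lemma~\ref{lemma:approximation-vector} to the full influence-spread level by viewing both $\sigma$ and $\hat\sigma$ as the same functional applied to two different initial-activation probability vectors. Concretely, let $\U(\w p)$ denote the expected number of consumer nodes eventually activated when, independently for each $j\in V$, node $j$ is initially activated with probability $p_j$ and influence then propagates in the social graph $G$ according to the IC model. Then by the equivalent two-stage description of the diffusion given in Section~\ref{sec:model} we have
\[
\sigma(\w x,\w y) \;=\; \U\bigl(\w f(\w x,\w y)\bigr), \qquad \hat\sigma(\w s,\w y) \;=\; \U\bigl(\hat{\w F}(\w s,\w y)\bigr).
\]

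Given $\w x$, I would first invoke Lemma~\ref{lemma:approximation-vector} to pick the corresponding $\w s\in\S$ (note that $\w s$ depends only on $\w x$, yet the coordinatewise sandwich holds simultaneously for every $\w y$ and every $j$). The upper bound $\hat\sigma(\w s,\w y)\le\sigma(\w x,\w y)$ then follows immediately from the pointwise inequality $\hat F_j(\w s,\w y)\le f_j(\w x,\w y)$ together with monotonicity of $\U$ in its probability-vector argument; monotonicity is a standard coupling argument over the joint randomness of the initial Bernoulli trials and the live/blocked edge samples in $G$.

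For the lower bound, set $\alpha=1-1/\e-\varepsilon$. The pointwise inequality $\hat F_j\ge \alpha f_j$ and monotonicity of $\U$ give $\hat\sigma(\w s,\w y)\ge \U(\alpha\,\w f(\w x,\w y))$, so it suffices to establish the scaling property $\U(\alpha\w p)\ge \alpha\,\U(\w p)$ for every $\w p\in[0,1]^m$ and $\alpha\in[0,1]$. I would prove this by a two-stage sampling argument: sampling a set $T$ with marginal probabilities $\alpha\w p$ is distributionally equivalent to first sampling $S$ with marginals $\w p$, and then independently retaining each $j\in S$ with probability $\alpha$. Conditioning on the live/blocked realization of all edges in $G$, the resulting spread becomes a deterministic monotone, submodular, nonnegative set function of the initially activated set (this is the standard reachability characterization of the IC model from \cite{kempe2003maximizing}). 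Applying Lemma~2.2 of \cite{vondrak2007submodularity} (the $\alpha$-subsampling inequality for monotone submodular functions with $\phi(\emptyset)\ge 0$) yields $\mathbb{E}[\phi(T)\mid S]\ge \alpha\,\phi(S)$; averaging over the edge realizations and then over $S$ gives $\U(\alpha\w p)\ge \alpha\,\U(\w p)$, as required.

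The main subtlety will be the bookkeeping of which source of randomness is being integrated at each step: one must fix the edge realization before applying the subsampling inequality (so that the ``set function'' is actually a fixed monotone submodular set function), and only then take expectation over edges and over the outer sampling of $S$. Once this order is in place, the lower bound chains as $\hat\sigma(\w s,\w y)\ge \U(\alpha\w f(\w x,\w y))\ge \alpha\,\U(\w f(\w x,\w y)) = \alpha\,\sigma(\w x,\w y)$, completing the proof.
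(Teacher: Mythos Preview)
Your proposal is correct and follows essentially the same route as the paper: both express $\sigma$ and $\hat\sigma$ as the same functional (the multilinear extension of the IC influence spread $\rho$) applied to $\w f$ and $\hat{\w F}$, obtain the upper bound by monotonicity, and obtain the lower bound via the scaling inequality $\bar\rho(\alpha\w p)\ge\alpha\,\bar\rho(\w p)$ from Lemma~2.2 of \cite{vondrak2007submodularity}. The only cosmetic difference is that you condition on the live/blocked edge realization before invoking Vondr\'ak's lemma, whereas the paper applies it directly to the multilinear extension of $\rho$, which is already monotone submodular as a set function by \cite{kempe2003maximizing}; your extra conditioning is harmless but unnecessary.
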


\begin{proof}
The right inequality follows immediately from Lemma \ref{lemma:approximation-vector}.

To prove the left inequality, we unfortunately need to define yet another function.
For $\w z \in \{0,1\}^m$, let $\rho(\w z)$ be the expected number of activated nodes 
given that the set of {\em initially} activated nodes is $Z \subseteq Y$ indicated by $\w z$.
Notice that $\rho(\w z)$ is also submodular \cite{kempe2003maximizing}.

For a fractional $\bar{\w z} \in [0,1]^m$, we extend $\bar \rho(\bar{\w z})$ to be the expectation 
over integral $\w z \in \{0,1\}^m$, where each coordinate is sampled independently with expectation $\bar z_j$.
(I.e. $\bar \rho(\bar{\w z})=\sum_{\w z\in\{0,1\}^m} \Big[\rho(\w z) \cdot\prod\left(\bar{z}_i ^ {z_i}\cdot (1-\bar z_i)^{1-z_i}\right) \Big]  $.)
Thus, $\sigma(\w{x},\w{y}) = \bar \rho\left( \w f(\w x, \w y) \right)$ 
and $\hat \sigma(\w{s},\w{y}) = \bar \rho\left( \hat{\w F}(\w s, \w y) \right)$.

Finally, since $\bar \rho(\w z)$ is the multilinear extension of a submodular function, 
we have (e.g. by Lemma 2.2 of \cite{vondrak2007submodularity}) that
	\begin{align*}
		& (1-1/\e-\varepsilon) \bar \rho\left( \w f(\w x, \w y) \right) \le \bar \rho\Big( (1-1/\e-\varepsilon) \cdot \w f(\w x, \w y) \Big) \le \bar \rho\left( \hat{\w F}(\w s, \w y) \right).
	\end{align*}
\end{proof}

\subsection{From $(1+\varepsilon)$-net to approximation algorithm}\label{sec:alg}

Armed with our $(1+\varepsilon)$-net, we can use standard submodular maximization techniques to approximately solve {\sc AIM}-$r$.
The full algorithm is summarized in Algorithm \ref{alg:main}, and referred as
	Sampled Double Greedy ({\sf SDG}) algorithm.

For every $\w s \in \S$, let $\w y_{\w s}$ be a $(1-1/\e-\varepsilon)$-approximation to the feasible vector $\w y_{\w s}^*$ that maximizes $\hat \sigma(\w s,\w y)$.
Recall that such an approximation can be found in polynomial time (with high probability) via standard submodular maximization techniques (see e.g. \cite{kempe2003maximizing} as well as Subsection \ref{rem:submodular}).
Let $(\w x^* , \w y^*)$ be the optimal feasible solution to the {\sc AIM}-$r$ problem.
Then for some $\w s^*$, we have that for every $\w y \in \{0,1\}^m$:
\begin{gather*}
	 (1-1/\e-\varepsilon) \sigma(\w{x}^*,\w{y}) \le \hat \sigma(\w s^*,\w y) \le \sigma(\w{x}^*,\w {y});
\end{gather*}

and in particular, 
\begin{gather}
\label{eq:best-y}
	\hat \sigma(\w s^*,\w y_{s^*}) \ge (1-1/\e-\varepsilon) \hat \sigma(\w{s}^*,\w y^*) \ge (1-1/\e-\varepsilon)^2 \sigma(\w{x}^*,\w{y}^*).
\end{gather}

For each fixed $\w y_{\w s}$, function $\sigma(\w x,\w y_{\w s})$ is monotone and
	submodular on $\w x$.
This is because we can remove all edges from $U$ to nodes not in the set indicated by $\w y_{\w s}$,
	and $\sigma(\w x,\w y_{\w s})$ is the influence spread of seed set $X$ in the combined bipartite
	graph and social graph after removing those edges.
Since diffusion from $X$ in this subgraph can be viewed as IC model diffusion, by
	\cite{kempe2003maximizing} we know that $\sigma(\w x,\w y_{\w s})$ is monotone and submodular on $\w x$.
Then we can take $\w x_{\w s}$ to be a $(1-1/\e-\varepsilon)$-approximation to the feasible vector $\w x_{\w s}^*$ that maximizes $\sigma(\w x,\w y_{\w s})$.
Thus, for $\w y_{s^*}$, we have
\begin{gather}
\label{eq:best-x}
	\sigma(\w x_{s^*},\w y_{s^*}) \ge (1-1/\e-\varepsilon) \sigma(\w{x}^*,\w y_{s^*}) \ge (1-1/\e-\varepsilon) \hat \sigma(\w{s}^*,\w y_{s^*}).
\end{gather}
Taking \eqref{eq:best-y} and \eqref{eq:best-x} together, we have
\begin{gather*}
	\sigma(\w x_{s^*},\w y_{s^*}) \ge (1-1/\e-\varepsilon)^3 \sigma(\w{x}^*,\w{y}^*).
\end{gather*}
	
This completes the proof of Theorem \ref{thm:main theorem}.
\qed

\begin{algorithm}[t]
	\caption{{\sf SDG}: A constant-factor approximation to {\sc AIM}-$r$ 
	}
	\label{alg:main}
	\LinesNumbered
	
	\KwIn{Bi-adjacency matrix $\K$, 
	adjacency matrix $P$, budgets $b_1$, $b_2$, accuracy parameter $\varepsilon > 0$}
	\KwOut{Subsets $(X,Y)$ that approximate the optimal $\sigma(X,Y)$}
	Construct $(1+\varepsilon)$-net $\S$ from $M$ by Algorithm~\ref{alg:net}\\
	\For{$\w s\in \S$}{
		Use greedy algorithm to find solution $\w y_{\w s}$ on submodular function
		 $\hat \sigma(\w s, \cdot)$ with budget $b_2$\\ \label{alg:sub-for-y} 
		Use greedy algorithm to find solution $\w x_{\w y_{\w s}}$ on submodular function  $\sigma(\cdot, \w y_\w s)$ with budget $b_1$\\ \label{alg:sub-for-x}
	}
	\KwRet{ $\argmax_{(\w x_{\w y_\w s}, \w y_\w s): \w s\in \S} \sigma(\w x_{\w y_\w s}, \w y_\w s)$ } \label{alg:findbests}
	
\end{algorithm}


\subsection{Running time}
\label{rem:submodular}

We have completely ignored the question of how to compute $\sigma(\w x, \w y)$ and its relaxation  $\hat \sigma(\w s, \w y)$.
In particular, this computation is necessary for the submodular maximization procedures used in Algorithm \ref{alg:main}.
Although their exact computations are \#P-hard \cite{chen2010scalable}, 
	both can be efficiently approximated with arbitrarily good precision
	by sampling from the corresponding random processes.
In particular, for the independent cascade model, both greedy steps in Algorithm \ref{alg:main} can apply the
	near-linear time algorithms in~\cite{borgs2014maximizing,TXS14}.

The total running time of {\sf SDG} is
	$O(m^r (\b+\log n)^r (b_1+b_2)(n+m+\ell) \varepsilon^{-(r+2)} \log(1/\delta)))$,
	where $O(m^r (\b+\log n)^r \varepsilon^{-r})$ is the size of $(1+\varepsilon)$-net $S_\varepsilon$,
	and $O(\varepsilon^{-2}(b_1+b_2)(n+m+\ell) \log(1/\delta))$ is the time running two greedy steps
	using the algorithm in \cite{TXS14}, with $\ell$ being the total number of edges in $B$ and $G$.
In some situations $r$ could be very small, e.g. $r=1$ when the influence probability from provider $i$
	to consumer $j$ can be approximated as the product of provider $i$'s influence strength and 
	consumer $j$'s susceptibility.
Moreover, in practice seed consumers may be selected from a candidate set of size $m'$
	(e.g. the fan base of a product) much smaller than the social network size ($m' << m$), 
	then the dominant term $m^r$ would be replaced by the much smaller $(m')^r$.
Therefore, {\sf SDG} would be efficient in these practical situations.

%
%
%
%

\section{General diffusion model}\label{sec:general}
Our results can be generalized to support any diffusion model in the social network $G$
	that has a monotone and submodular influence spread function.
More specifically, for any subset $Z \subseteq V$ in the social graph $G$, 
	let $\rho(Z)$ be the influence spread of $Z$ in $G$, that is, the expected number of activated
	nodes in $G$ after the diffusion process when $Z$ is selected as the initial seed set.
The diffusion model in the combined bipartite graph $B$ and social graph $G$ with selected
	seed providers $X$ and seed consumers $Y$ is as follows:
First, all seed providers in $X$ are activated; and then following the probabilities given in the
	bi-adjacency matrix $M$, a subset of seed consumers, $Z \subseteq Y$, is activated
	(each $j\in Y$ is activated independently with probability $\w f(\w x, \w y)$ as before).
Then the diffusion from $Z$ follows the social graph diffusion model, with expected spread
	$\rho(Z)$.
We still use notation $\sigma(X,Y)$ to represent the influence spread of $X$ and $Y$ in the combined
	network.
Then we have $\sigma(X,Y)=\sum_{Z\subseteq Y} \Pr_{X}[Z]\rho(Z)$, where
	$\Pr_{X}(Z)$ is the probability that  $Z$  is the initially activated set in $Y$
	by provider seed set $X$ according to the matrix $M$.

One particular instantiation of interest is that of {\em background propagation}:
	each consumer has some initial (``background'') probability of being influenced
	by content providers even without being seeded by the advertiser;
	if the advertiser seeds the consumer, she has a (higher)
	{\em boosted probability} of being influenced by the content providers.
To implement this model as a submodular influence function, 
	we can sample the result of the diffusion process from the background probabilities
	and incorporate the expected outcome into the definition of $\rho(\cdot)$.

In general, we can show that as long as $\rho(\cdot)$ is monotone, submodular, and polynomial-time
	computable, and matrix $M$ is of constant rank as assumed before,
	AIM problem is still solvable in polynomial time.
The main revision of the proof is to show that in the general model
	$\sigma(X,Y)$ is still monotone and submodular when we fix either $X$ or $Y$.	
\ifFULL
See Appendix \ref{appendix:sec:definition}
\else
See the full report \cite{CLLR15}
\fi
for details.

\section{Conclusion}

In this paper we propose the amphibious influence maximization (AIM) model 
as a proxy framework that combines traditional marketing
	via content providers together with viral marketing to consumers 
	in social networks.
We show that the associated computational problem is \NP-hard to approximate to any constant factor, and provide a
	polynomial-time algorithm with $(1-1/\e-\varepsilon)^3$ approximation ratio for any (polynomially small) $\varepsilon > 0$ when we restrict the weighted bi-adjacency matrix $M$ for the provider-consumer network
	to be of constant rank.

It would be interesting to see to what extent {\em amphibious marketing}
	(i.e. targeting individual users via a combination of traditional content providers and social network viral marketing) can
	be implemented in practice.
Beyond the algorithmic challenge of optimizing the sets of seed providers and consumers we discuss in this paper,
	this notion raises many interesting challenges in terms of learning the influence factors (the adjacency matrices in our model),
	privacy of the consumers, economic incentives, etc.

From the perspective of theoretical computer science,
	we view our algorithm for AIM with low rank assumption 
	as part of the ongoing effort in the community to incorporate
	assumptions that are both reasonable in practice, and allow better algorithmic results.
In this context we remark that although our low rank assumption is most natural in the context of content providers-consumers influence matrix,
	it is also closely related to another important property that has been observed in graphs of social networks:
	the eigenvalues exhibit a power law \cite{FFF99-power_law, MP02-eigenv_power_law}.

\ignore{
Future studies on AIM may investigate whether the approximation ratio or the running time of
	the current algorithm could be improved, or if there is other natural assumptions on the network
	besides the constant rank assumption on $M$ that could allow polynomial time approximation algorithms.
Other variants of AIM may also be worth further study, such as what if the influence probability between
	two selected seed consumers is also boosted, or what if the probabilities from non-seed providers to seed
	consumers and seed providers to non-seed consumers are boosted to different levels.
}



%
\bibliographystyle{acmsmall}
\bibliography{draft}

\begin{thebibliography}{}

\bibitem[\protect\citeauthoryear{Alon, Arora, Manokaran, Moshkovitz, and
  Weinstein}{Alon et~al\mbox{.}}{2011}]{densest_k-subgraph_AAMMW11}
{\sc Alon, N.}, {\sc Arora, S.}, {\sc Manokaran, R.}, {\sc Moshkovitz, D.},
  {\sc and} {\sc Weinstein, O.} 2011.
\newblock Inapproximabilty of densest k-subgraph from average case hardness.
\newblock Manuscript.

\bibitem[\protect\citeauthoryear{Alon, Lee, Shraibman, and Vempala}{Alon
  et~al\mbox{.}}{2013}]{ALSV13-approx_rank}
{\sc Alon, N.}, {\sc Lee, T.}, {\sc Shraibman, A.}, {\sc and} {\sc Vempala, S.}
  2013.
\newblock The approximate rank of a matrix and its algorithmic applications:
  approximate rank.
\newblock In {\em STOC'13}. 675--684.

\bibitem[\protect\citeauthoryear{Alon and Spencer}{Alon and
  Spencer}{1992}]{AS92-probabilistic_method}
{\sc Alon, N.} {\sc and} {\sc Spencer, J.} 1992.
\newblock {\em The Probabilistic Method}.
\newblock John Wiley.

\bibitem[\protect\citeauthoryear{Badanidiyuru, Papadimitriou, Rubinstein,
  Seeman, and Singer}{Badanidiyuru et~al\mbox{.}}{2014}]{BPRSS}
{\sc Badanidiyuru, A.}, {\sc Papadimitriou, C.}, {\sc Rubinstein, A.}, {\sc
  Seeman, L.}, {\sc and} {\sc Singer, Y.} 2014.
\newblock A $(1-1/e)^2$-approximation for adaptive seeding of monotone
  submodular functions.
\newblock Manuscript.

\bibitem[\protect\citeauthoryear{Borgs, Brautbar, Chayes, and Lucier}{Borgs
  et~al\mbox{.}}{2014}]{borgs2014maximizing}
{\sc Borgs, C.}, {\sc Brautbar, M.}, {\sc Chayes, J.}, {\sc and} {\sc Lucier,
  B.} 2014.
\newblock Maximizing social influence in nearly optimal time.
\newblock In {\em SODA'14}.

\bibitem[\protect\citeauthoryear{Chen, Lakshmanan, and Castillo}{Chen
  et~al\mbox{.}}{2013}]{CLC13}
{\sc Chen, W.}, {\sc Lakshmanan, L.~V.}, {\sc and} {\sc Castillo, C.} 2013.
\newblock {\em Information and Influence Propagation in Social Networks}.
\newblock Morgan \& Claypool.

\bibitem[\protect\citeauthoryear{Chen, Wang, and Wang}{Chen
  et~al\mbox{.}}{2010}]{chen2010scalable}
{\sc Chen, W.}, {\sc Wang, C.}, {\sc and} {\sc Wang, Y.} 2010.
\newblock Scalable influence maximization for prevalent viral marketing in
  large-scale social networks.
\newblock In {\em KDD'10}. ACM, 1029--1038.

\bibitem[\protect\citeauthoryear{Domingos and Richardson}{Domingos and
  Richardson}{2001}]{domingos2001mining}
{\sc Domingos, P.} {\sc and} {\sc Richardson, M.} 2001.
\newblock Mining the network value of customers.
\newblock In {\em KDD'01}. ACM, 57--66.

\bibitem[\protect\citeauthoryear{Dughmi, Roughgarden, and Yan}{Dughmi
  et~al\mbox{.}}{2011}]{DRY11}
{\sc Dughmi, S.}, {\sc Roughgarden, T.}, {\sc and} {\sc Yan, Q.} 2011.
\newblock From convex optimization to randomized mechanisms: toward optimal
  combinatorial auctions.
\newblock In {\em STOC'11}. STOC '11. ACM, New York, NY, USA, 149--158.

\bibitem[\protect\citeauthoryear{Faloutsos, Faloutsos, and Faloutsos}{Faloutsos
  et~al\mbox{.}}{1999}]{FFF99-power_law}
{\sc Faloutsos, M.}, {\sc Faloutsos, P.}, {\sc and} {\sc Faloutsos, C.} 1999.
\newblock On power-law relationships of the internet topology.
\newblock {\em SIGCOMM Comput. Commun. Rev.\/}~{\em 29,\/}~4, 251--262.

\bibitem[\protect\citeauthoryear{Feige}{Feige}{1998}]{feige98}
{\sc Feige, U.} 1998.
\newblock A threshold of $\ln n$ for approximating set cover.
\newblock {\em J. ACM\/}~{\em 45,\/}~4, 634--652.

\bibitem[\protect\citeauthoryear{Kempe, Kleinberg, and Tardos}{Kempe
  et~al\mbox{.}}{2003}]{kempe2003maximizing}
{\sc Kempe, D.}, {\sc Kleinberg, J.}, {\sc and} {\sc Tardos, {\'E}.} 2003.
\newblock Maximizing the spread of influence through a social network.
\newblock In {\em KDD'03}. ACM, 137--146.

\bibitem[\protect\citeauthoryear{Koren, Bell, and Volinsky}{Koren
  et~al\mbox{.}}{2009}]{KBV09-netflix}
{\sc Koren, Y.}, {\sc Bell, R.~M.}, {\sc and} {\sc Volinsky, C.} 2009.
\newblock Matrix factorization techniques for recommender systems.
\newblock {\em {IEEE} Computer\/}~{\em 42,\/}~8, 30--37.

\bibitem[\protect\citeauthoryear{Mihail and Papadimitriou}{Mihail and
  Papadimitriou}{2002}]{MP02-eigenv_power_law}
{\sc Mihail, M.} {\sc and} {\sc Papadimitriou, C.~H.} 2002.
\newblock On the eigenvalue power law.
\newblock In {\em RANDOM}. 254--262.

\bibitem[\protect\citeauthoryear{Papailiopoulos, Mitliagkas, Dimakis, and
  Caramanis}{Papailiopoulos et~al\mbox{.}}{2014}]{PMDC14-constant_rank}
{\sc Papailiopoulos, D.~S.}, {\sc Mitliagkas, I.}, {\sc Dimakis, A.~G.}, {\sc
  and} {\sc Caramanis, C.} 2014.
\newblock Finding dense subgraphs via low-rank bilinear optimization.
\newblock In {\em ICML'14}. 1890--1898.

\bibitem[\protect\citeauthoryear{Richardson and Domingos}{Richardson and
  Domingos}{2002}]{richardson2002mining}
{\sc Richardson, M.} {\sc and} {\sc Domingos, P.} 2002.
\newblock Mining knowledge-sharing sites for viral marketing.
\newblock In {\em KDD'02}. ACM, 61--70.

\bibitem[\protect\citeauthoryear{Rubinstein, Seeman, and Singer}{Rubinstein
  et~al\mbox{.}}{2015}]{RSS14-knapsack-seeding}
{\sc Rubinstein, A.}, {\sc Seeman, L.}, {\sc and} {\sc Singer, Y.} 2015.
\newblock Approximability of adaptive seeding under knapsack constraints.
\newblock In {\em EC'15}.
\newblock To appear.

\bibitem[\protect\citeauthoryear{Seeman and Singer}{Seeman and
  Singer}{2013}]{SeemanS13}
{\sc Seeman, L.} {\sc and} {\sc Singer, Y.} 2013.
\newblock Adaptive seeding in social networks.
\newblock In {\em FOCS'13}.

\bibitem[\protect\citeauthoryear{Tang, Xiao, and Shi}{Tang
  et~al\mbox{.}}{2014}]{TXS14}
{\sc Tang, Y.}, {\sc Xiao, X.}, {\sc and} {\sc Shi, Y.} 2014.
\newblock Influence maximization: near-optimal time complexity meets practical
  efficiency.
\newblock In {\em Proceedings of the 2014 ACM SIGMOD international conference
  on Management of data}.

\bibitem[\protect\citeauthoryear{Vondr{\'a}k}{Vondr{\'a}k}{2007}]{vondrak2007submodularity}
{\sc Vondr{\'a}k, J.} 2007.
\newblock Submodularity in combinatorial optimization.
\newblock Ph.D. thesis, Citeseer.

\bibitem[\protect\citeauthoryear{Yang, Hung, Lee, and Chen}{Yang
  et~al\mbox{.}}{2013}]{YHLC13}
{\sc Yang, D.-N.}, {\sc Hung, H.-J.}, {\sc Lee, W.-C.}, {\sc and} {\sc Chen,
  W.} 2013.
\newblock {Maximizing acceptance probability for active friending in online
  social networks}.
\newblock In {\em KDD'13}.

\end{thebibliography}



\ifFULL

\appendix


\appendix
\newpage
\section*{APPENDIX}

\renewcommand{\f}{\hat{\sigma}}

\section{Generalized model}
\label{appendix:sec:definition}

In this appendix, we extend the underlying diffusion model in social graph $G$  to allow a general monotone and submodular influence spread function $\rho(\cdot)$. 
We  show that as long as the general influence spread function $\rho(\cdot)$ can be approximated in polynomial time and matrix $M$ is of constant rank as assumed before, 
The same SDG Algorithm (except that we now need a computation oracle for $\rho(\cdot)$, see
	Algorithm~\ref{appendix:alg:main}) solves the generalized AIM problem with the same approximation ratio in polynomial time.
We also discuss a particular consequence of this generalization that allows
	each consumer node to have a background activation probability even if
	it is not selected as a seed.


\subsection{Definition of the generalized model}
\label{sec:generalmodel}


Instead of assuming the particular IC model, we  assume that the influence spread function over the social graph  is a  general monotone and submodular function.
Formally, we use $G=(V,\rho)$ to denote the social graph where $\rho$ is the general monotone and submodular function computing the resulting influence spread of seed consumers. Namely,  given any subset $Z\subseteq V$, $\rho(Z)$ is the resulting influence spread through $G$ when $Z$ is the set of consumers initially influenced by the content providers. In the following, we assume that $\rho(Z)$ is monotone and submodular.   

We still use notation $\sigma(X,Y)$ to represent the influence spread of $X$ and $Y$ in the combined
network.
Then we have $\sigma(X,Y)=\sum_{Z\subseteq Y} \Pr_{X}[Z]\rho(Z)$, where
$\Pr_{X}(Z)$ is the probability that  $Z$  is the initially activated set in $Y$
by provider seed set $X$ according to the matrix $M$.

Our goal is still the same as the original  AIM: to find a set $X$ of seed providers of size $b_1$ and a set $Y$ of seed consumers of size
$b_2$ such that they work together to generate the largest influence spread, namely 
	maximizing $\sigma(X,Y)$. 



\subsection{Result and Proof for the Extension}
\label{appendix:sec:alg}
We still restrict the bi-adjacency matrix $M$ to be of constant rank $r$. 
Moreover, we  assume that there is a value oracle $\cal O$ computing  the general influence spread function $\rho(Z)$ for  any set $Z\subseteq V$ with running  time $t_\cal O$.  
The full algorithm is summarized in Algorithm \ref{appendix:alg:main}, 
	and the only adaptation is to use the value oracle $\cal O$.
In particular, when we use greedy algorithm on function 
 	$\f(\w s, \cdot), \sigma(\cdot, \w y _ {\w s})$, we can use Monte Carlo simulation to obtain
 	the initially activated set $Z$ in $G$ and then obtain $\rho(Z)$ from the oracle,
 	which would give us a $1 \pm \varepsilon$ approximation of $\f(\w s, \w y), \sigma(\w x, \w y _ {\w s})$ 
 	with high probability, for any fixed $\w s \in [0,\infty)^m, \w x, \w y,\w y _ {\w s} \in \{0,1\}^m$. 
 The theoretical guarantee is stated as follows: 

%
%
%
%
%

\begin{theorem} \label{appendix:thm:main theorem}
	Assuming that there exists a value oracle $\mathcal{O}$ computing the function $\rho(\w y)$
	for  any indicator vector $\w y$ with running time $t_\mathcal{O}$,
	for any $\delta,\varepsilon>0$, with probability $1-\delta$, 
	Algorithm~\ref{appendix:alg:main} solves the generalized {\sc AIM} problem with
	constant rank-$r$ matrix $\K$
	with approximation ratio $(1-1/\e-\varepsilon)^3$  and in time
	polynomial in $n, m, \b, 1/\varepsilon, \log (1/\delta), t_\mathcal{O}$.
%
\end{theorem}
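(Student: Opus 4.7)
The plan is to verify that each step in the proof of Theorem~\ref{thm:main theorem} extends to the abstract setting where diffusion in the social graph is represented by a monotone submodular function $\rho$ accessible only through the oracle $\mathcal{O}$. The $(1+\varepsilon)$-net construction in Algorithm~\ref{alg:net} depends only on $M$, so it is unchanged, and Algorithm~\ref{appendix:alg:main} is structurally identical to Algorithm~\ref{alg:main} except that every evaluation of $\sigma$ or $\hat\sigma$ is replaced by a Monte Carlo estimate based on queries to $\mathcal{O}$. The analytic work is therefore to re-establish three structural facts: (i) for every fixed $\w s$, the function $\hat\sigma(\w s,\cdot)$ is monotone submodular in $\w y$; (ii) for every fixed $\w y$, the function $\sigma(\cdot,\w y)$ is monotone submodular in $\w x$; and (iii) for every $\w x \in \{0,1\}^n$ there exists $\w s \in \S$ such that $(1-1/\e-\varepsilon)\,\sigma(\w x,\w y) \le \hat\sigma(\w s,\w y) \le \sigma(\w x,\w y)$ for all $\w y \in \{0,1\}^m$.

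For (i), I would write $\hat\sigma(\w s,\w y) = \mathbb{E}_Z[\rho(Z)]$, where $Z\subseteq Y$ is obtained by including each $j \in Y$ independently with probability $1 - \e^{-s_j}$. Coupling the inclusion coins across choices of $\w y$ and invoking the monotonicity and submodularity of $\rho$ yields the claim directly.

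Fact (ii) is the step I expect to require the most care, since the original IC-based argument short-circuited it by absorbing the bipartite layer into a single IC instance. Instead, I would use the following coupling: sample independent uniform variables $U_{ij}$ for each edge $(i,j)$ of $B$, declare $(i,j)$ live iff $U_{ij} < M_{ij}$, and let $Z(X) \subseteq Y$ be the set of $j \in Y$ with a live edge from some $i \in X$. Under this coupling, $X \mapsto Z(X)$ is monotone and has decreasing set-valued marginals: for $X \subseteq X'$ and $i \notin X'$, $Z(X\cup\{i\}) \setminus Z(X) \supseteq Z(X'\cup\{i\}) \setminus Z(X')$. Composing with the monotone submodular $\rho$ and taking expectations over $U$ gives (ii); this "set-valued submodular map composed with a submodular function" step is exactly the replacement for the IC-specific argument.

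For (iii), the proof of Lemma~\ref{lemma:approximation-sigma} relied only on $\rho$ being monotone submodular and on applying Lemma 2.2 of~\cite{vondrak2007submodularity} to the multilinear extension $\bar\rho$; both ingredients remain valid for abstract $\rho$, so the same proof carries over. Combining (i)--(iii) and running the greedy submodular maximization in Lines~\ref{alg:sub-for-y} and~\ref{alg:sub-for-x} produces the three multiplicative losses of $(1-1/\e-\varepsilon)$ exactly as in the proof of Theorem~\ref{thm:main theorem}, yielding approximation ratio $(1-1/\e-\varepsilon)^3$. Finally, since $\sigma$ and $\hat\sigma$ are expectations of $\rho(Z) \in [0,m]$ under easily sampleable product distributions on $Y$, Hoeffding/Chernoff guarantees that $\mathrm{poly}(n, m, 1/\varepsilon, \log(1/\delta))$ calls to $\mathcal{O}$ produce a $(1\pm\varepsilon)$-approximate value oracle with probability at least $1-\delta$, preserving both the approximation ratio (up to absorbing the extra $\varepsilon$ error into the stated $\varepsilon$) and the polynomial dependence in $n, m, \b, 1/\varepsilon, \log(1/\delta), t_{\mathcal{O}}$.
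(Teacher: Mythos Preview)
Your proposal is correct and follows essentially the same approach as the paper: the paper isolates exactly your facts (i) and (ii) as a separate lemma (Lemma~\ref{appendix:lemma:properties}), proving them via the same fix-the-live-edge-realization coupling (using the lattice form $\rho(\mathcal R(W))+\rho(\mathcal R(T))\ge\rho(\mathcal R(W\cup T))+\rho(\mathcal R(W\cap T))$ where you use the equivalent marginal form), and then notes that the remainder of the proof of Theorem~\ref{thm:main theorem}---including Lemma~\ref{lemma:approximation-sigma} and the Monte Carlo estimation---carries over verbatim. The only cosmetic difference is that for (i) the paper writes $\hat\sigma(\w s,\w y)$ as an expectation of $\rho(Z\cap Y)$ over a set $Z$ sampled with probabilities $1-\e^{-s_j}$ independently of $Y$, which is precisely your coupling restated.
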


\begin{algorithm}[t]
	\caption{ {\sf SDG}: A constant-factor approximation for generalized AIM-$r$}
	\label{appendix:alg:main}
	\LinesNumbered
	\KwIn{Bi-adjacency matrix $\K$, value oracle $\mathcal{O}$ that calculates $\rho(\w y)$,
		budget $b_1$, $b_2$, and parameter $\varepsilon$}
	\KwOut{Subsets $(X,Y)$ that can make $\sigma(X,Y)$ be near-optimal value.}
	Construct $(1+\varepsilon)$-net $\S$ from $M$ by Algorithm~\ref{alg:net}\\
	\For{$\w s\in \S$}{
	Use greedy algorithm with oracle $\cal O$ to find $\w y_{\w s}$ on submodular function
		$\f(\w s, \cdot)$ with budget $b_2$  \label{alg:sub-for-y}	\\
		Use greedy algorithm with oracle $\cal O$ to find $\w x_{\w y_{\w s}}$ on submodular function $\sigma(\cdot, \w y _ {\w s})$ with budget $b_1$\\ \label{alg:sub-for-x}
			
	}
	\KwRet{ $\argmax_{(\w x_{\w y_\w s}, \w y_\w s): \w s\in \S} \sigma( \w x_{\w y_\w s}, \w y_\w s)$ } \label{alg:findbests}
\end{algorithm}

Notice that the proof of above theorem is essentially the same as Theorem \ref{thm:main theorem}, except that  we have to prove, for the general model, the function $\sigma(\cdot, \w y)$  and its relaxation $\f(\w s, \cdot)$  are still monotone and submodular. Thus it is enough for us to prove the two submodularities which are  stated 
	in the following lemma.

\begin{lemma} \label{appendix:lemma:properties}
We have the following two properties:
	\begin{enumerate}
	
		\item \label{appendix:clm:submodular-for-u} 
		For any fixed 
		$\w{y}\in\{0,1\}^{m}$, 
		$\sigma(\w x,\w y)$ is a monotone and submodular function on $\w x$.
		
		\item \label{appendix:clm:submodular-for-v}
		For any fixed  $\w s \in[0,\infty)^m$, $\f( \w s,\w y) $ is a monotone and submodular function on $\w y$.
		
		%

	\end{enumerate}
\end{lemma}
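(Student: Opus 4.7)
The plan is to prove both properties via a coupling argument that exposes the union/intersection structure of the random seed set and then invokes monotonicity and submodularity of $\rho$ pointwise, finishing by taking expectation.

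For Property~\ref{appendix:clm:submodular-for-u}, I would fix $Y$ and introduce independent Bernoulli random variables $B_{ij} \sim \mathrm{Bern}(M_{ij})$ for every pair $(i,j) \in U \times Y$, coupled across all choices of $X$. The initial activation set in $Y$ induced by $X$ is then distributed as $Z_X := \bigcup_{i \in X} \Lambda_i$, where $\Lambda_i := \{j \in Y : B_{ij} = 1\}$ does not depend on $X$. By the definition in Section~\ref{sec:generalmodel}, $\sigma(X,Y) = \mathrm{E}[\rho(Z_X)]$. For every realization of the $B_{ij}$'s and every $X \subseteq X' \subseteq U$ and $i \notin X'$, we have $Z_{X \cup \{i\}} = Z_X \cup \Lambda_i$ and $Z_X \subseteq Z_{X'}$, so monotonicity of $\rho$ gives $\rho(Z_X) \le \rho(Z_{X'})$, and submodularity of $\rho$ gives
\[
\rho(Z_{X \cup \{i\}}) - \rho(Z_X) = \rho(Z_X \cup \Lambda_i) - \rho(Z_X) \ge \rho(Z_{X'} \cup \Lambda_i) - \rho(Z_{X'}) = \rho(Z_{X' \cup \{i\}}) - \rho(Z_{X'}).
\]
Taking expectation over the $B_{ij}$'s, which is linear and hence preserves both inequalities, yields monotonicity and submodularity of $\sigma(\cdot, Y)$.

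For Property~\ref{appendix:clm:submodular-for-v}, I would fix $\w s$ and analogously introduce independent Bernoulli variables $A_j \sim \mathrm{Bern}(1 - e^{-s_j})$ for each $j \in V$; let $\mathcal{A} := \{j : A_j = 1\}$. By the definition of $\hat\sigma$, the random initial seed set is distributed as $Y \cap \mathcal{A}$, so $\hat\sigma(\w s, \w y) = \mathrm{E}[\rho(Y \cap \mathcal{A})]$. For every realization of $\mathcal{A}$, define $h(Y) := \rho(Y \cap \mathcal{A})$. For $j \notin \mathcal{A}$ the marginal $h(Y \cup \{j\}) - h(Y)$ is zero; for $j \in \mathcal{A}$ it equals $\rho((Y \cap \mathcal{A}) \cup \{j\}) - \rho(Y \cap \mathcal{A})$, which is non-negative and, for $Y \subseteq Y'$, non-increasing in $Y$ by monotonicity and submodularity of $\rho$ (since $Y \cap \mathcal{A} \subseteq Y' \cap \mathcal{A}$). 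Hence $h$ is monotone and submodular pointwise, and taking expectation over $\mathcal{A}$ concludes the proof.

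The main conceptual step is identifying the coupling that rewrites the random seed set as a union (for Property 1) or as an intersection with a fixed random set (for Property 2); once this is done, the argument reduces to the standard fact that monotone submodular functions are closed under restriction to a fixed ground set, composition with monotone submodular set-valued maps of this ``coverage'' form, and expectation. I do not anticipate any subtle obstacle beyond verifying that the coupling indeed reproduces the marginal distribution of the initial activation set, which follows directly from the independence assumptions built into the definitions of $\Pr_X[Z]$ and of $\hat\sigma$.
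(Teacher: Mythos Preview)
Your proof is correct and follows essentially the same approach as the paper: both arguments couple all choices of $X$ (resp.\ $Y$) by fixing the edge/activation randomness, then verify the submodularity inequality pointwise before taking expectation. The only cosmetic difference is that the paper phrases Property~(1) via the union--intersection inequality $\rho(\mathcal{R}(W))+\rho(\mathcal{R}(T))\ge \rho(\mathcal{R}(W\cup T))+\rho(\mathcal{R}(W\cap T))$ on a fixed live-edge graph, whereas you use the equivalent diminishing-returns formulation with the coverage sets $\Lambda_i$.
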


%
	\renewcommand{\A}{\mathcal{R}}
	\newcommand{\bvec}{\boldsymbol \beta}
%
%
%
	
\begin{proof}
	\paragraph{Property~(1)}

For any fixed set $Y$, denote the random subset of $Y$ containing 
	nodes activated by $X$ as $\A_Y(X)$ where the randomness comes from 
	probabilistic edges in the bipartite graph $B$ which will be 
	sampled according to the biadjacency matrix $M$.
Then $\sigma(\w x, \w y)= \sigma(X,Y) = \E{\U(\A_Y(X))}$. 
	Since $\E{\U(\A_Y(\cdot))}$ is a composition of monotone functions ($\E{\cdot}$, $\U(\cdot)$, and $\A_Y(\cdot)$), it is also monotone.
	To show that $\E{\U(\A_Y(\cdot))}$ is also submodular,
	we prove $\E{\U(\A_Y(W))}+\E{\U(\A_Y(T))} \geq  \E{\U(\A_Y(W\cup T))} + \E{\U(\A_Y(W\cap T))}$, for any $W,T \subseteq U$.
	
	Consider the fixed $Y$ and fix any realization of live-edge bipartite graph $B_L$ when all edges are sampled according to $M$.
	We use notation $\A_{B_L, Y}(X)$ to denote the set of all nodes in $Y$ 
	reachable from $X$ in graph $B_L$.
	Next, it is enough to prove for all possible live-edge graph $B_L$ and set $Y \subseteq V$, 
	\begin{equation}
	\U(\A_{B_L,Y}(W))+\U(\A_{B_L,Y}(T)) \geq \U(\A_{B_L,Y}(W\cup T)) + \U(\A_{B_L,Y}(W\cap T)). \label{appendix:eq:property-1}
	\end{equation} 
	For simplicify, we omit the subscript of notation, and let $\A = \A_{B_L,Y}$.
	
Since $\U(\cdot)$ is also a submodular function, we have
	\begin{equation}
	\U(\A(W))+\U(\A(T)) \geq  \U(\A(W)\cup \A(T)) + \U(\A(W)\cap \A(T)). \label{appendix:eq:property-1-1}
	\end{equation} 
Observe that
	\begin{align*}
	& \A(W\cup T) = \bigcup_{u\in W\cup T} \A(u) = \left(\bigcup_{u\in W} \A(u) \right) \cup \left(\bigcup_{u\in  T} \A(u) \right) =  \A(W)\cup \A(T); \\
	& \A(W\cap T) \subseteq \A(W), \A(T), \hbox{ and hence } \A(W\cap T) \subseteq \A(W)\cap \A(T). 
	\end{align*}
	Since $\U(\cdot)$ is  a monotone function, we know 
	$$
	\U\left(\A(W)\cup \A(T)\right)\ge \U(\A(W\cup T)), \hbox{ and } \U\left(\A(W)\cap \A(T)\right) \ge \U(\A(W\cap T)).
	$$
	Together with \eqref{appendix:eq:property-1-1}, Inequality~\eqref{appendix:eq:property-1} can be derived, which 
	completes the proof of Property~(1).

	\paragraph{Property~(2)}
	Note that, $\hat \sigma(\w{s},\w{y}) = \sum_{ Z\subseteq V} \Big[\rho(Z\cap Y) \cdot \Pr_{\w s, \w y}[Z] \Big]
$  where $\Pr_{\w s, \w y}[Z]$ denotes the probability that $Z$ is sampled out  from $V$  according to $\hF(\w s,\w y)$. Since for every fixed set $Z\subseteq V$,  $\rho(Z\cap Y)$ is a monotone and submodular function on $Y$,  $\hat \sigma(\w{s},\w{y})$ is a weighted average over such funcitons.  Therefore $\hat \sigma(\w{s},\w{y})$ is also a monotone and submodular function. 
\end{proof}	

\subsection{Supporting background probabilities for consumer nodes}
The above extended model also allows us to consider the following
	extension: we assume that every node $v$ in the social graph $G$ has an background (activation) probability $b_v$, that is,  the probability that  $v$ can be activated as one of the influence spread sources is $b_v$, 
	independent of whether $v$ is selected as a consumer seed.
As a result, the set $Z$ of initially activated consumer nodes in $G$ comes
	from two sources: a node $v$ is in $Z$ either because $v$ is selected
	as a consumer seed in $Y$ and $v$ is activated by some provider seed
	in $X$ through the bipartite graph $B$ with bi-adjacency matrix $M$, or
	$v$ is activated independently by a background probability $p_v$.
Then the final influence spread is $\rho(Z)$ once $Z$ is determined.

This extension covers the realistic cases where a consumer may pay attention
	to the advertiser's campaign anyway (either from content providers or
	any other unspecified sources) whether or not she is selected as 
	a seed, but if she is selected as a seed, she will pay more attention to
	the selected content providers and her probability of propagating the
	campaign is boosted.

For convenience, denote the vector of all background probabilities as $\w b$. 
We  use notation $\sigma'(X,Y)$ to represent the influence spread of $X$ and $Y$ in the combined network with background probabilities. 
Then we have $\sigma'(X,Y)=\sum_{Z_0\subseteq V}\sum_{Z\subseteq Y} \Pr_{\w b}[Z_0]\Pr_{X}[Z]\rho(Z\cup Z_0)$, where
$\Pr_{\w b}(Z_0)$ is the probability that $Z_0$  is sampled out from  $V$
	as the intially activated node set according to $\w b$, and
	$\Pr_{X}(Z)$ is the probability that $Z$ is sampled out from $Y$
	as the initially activated node set activated by
	the provider seed set $X$ according to the matrix $M$. 


We now show that this extension $\sigma'(X,Y)$ can be treated as a
	special case of the general model defined in Section~\ref{sec:generalmodel}.
By the definition of $\sigma'(X,Y)$, we have
$$\sigma'(X,Y)=\sum_{Z_0\subseteq V}\sum_{Z\subseteq Y} 
\Pr_{\w b}[Z_0]\Pr_{X}[Z]\rho(Z\cup Z_0)
=\sum_{Z\subseteq Y} \Pr_{X}[Z] \sum_{Z_0\subseteq V}
	\Pr_{\w b}[Z_0]\rho(Z\cup Z_0).$$
Define $\rho'(Z) = \sum_{Z_0\subseteq V} \Pr_{\w b}[Z_0]\rho(Z\cup Z_0)$,
	then we have $\sigma'(X,Y)= \sum_{Z\subseteq Y} \Pr_{X}[Z] \rho'(Z) $.
Hence, $\sigma'(X,Y)$ can be viewed as the final influence spread in the
	general model defined in Section~\ref{sec:generalmodel} with
	$\rho'$ as the influence spread in the social network $G$.
Since $\rho$ is monotone and submodular, it is straightforward to check that 
	$\rho(Z\cup Z_0)$ is monotone and submodular in $Z$ for any $Z_0$, and thus
	$\rho'(Z)$ as a non-negative linear combination of
	$\rho(Z\cup Z_0)$'s is also monotone and submodular.
Therefore, the extension with background probabilities can indeed be treated
	as a special case of the general model, and thus the algorithm and result
	in Section~\ref{appendix:sec:alg} can cover this further extension.
The only thing is that to compute $\rho'(Z)$, we may need to combine Monte Carlo
	simulations for set $Z_0$ 
	with the computation oracle for $\rho(\cdot)$ to get an
	accurate estimate for $\rho'(Z)$.

{\tiny }
\section{Hardness of approximation 
	result for Acceptance Probability Maximization}
\label{sec:APM}

In this appendix, we apply ideas from our hardness for AIM to prove the hardness of
	approximation result for the problem of 
	{\em acceptance probability maximization (APM)} studied 
	by \cite{YHLC13} in the context of active friending.
In APM, an initiator $s$ tries to
	find $k$ nodes in a social network to send friending requests to in
	order to maximize the eventual acceptance probability of a target node $t$,
	when $s$ finally sends a friending request to $t$.
In this model, if $s$ sends a friending request to a non-friend
	$v$ in the network, then the common friends of $s$ and $v$ would each
	independently influence $v$ to accept the request from $s$; once
	$v$ accepts the request, the influence can further propagate to 
	$v$'s friends who also receive friending requests from $s$.
Technically, the diffusion is formulated as following the
	independent cascade (IC) model and the maximization problem is equivalent
	to finding a subgraph such that the activation probability of target $t$
	is maximized when diffusion only propagates in this subgraph from seed
	nodes to $t$, where seed nodes are essentially the original friends of
	source node $s$.
We formally restate the APM problem below.

\begin{definition}[Acceptance Probability Maximization (APM) \cite{YHLC13}]\label{prob:original}
Given a graph $G=\left(V,E\right)$ with independent probabilities $p_{e}$
	on the edges, seed set $S\subseteq V$, a target node $t\in V \setminus S$,
	and a budget $B$. 
The output of APM is a subset $W\subseteq V$ of size 
	$\left|W\right|\leq B$.
Let $G(S\cup W \cup \{t\})$ be the subgraph of $G$ induced by
	nodes in $S\cup W \cup \{t\}$, and suppose that
	influence diffusion in $G(S\cup W \cup \{t\})$ follows the independent cascade model
	with edge probabilities $p_e$ for every edge $e$ in the subgraph 
	$G(S\cup W \cup \{t\})$.
The goal of APM is to maximize the activation probability of $t$ 
	when influence diffusion is from the seed set $S$ and is restricted to the subgraph $G(S\cup W \cup \{t\})$.

\end{definition}

The APM problem bears similarities to the AIM problem --- both are
	maximizing the effect of influence diffusion, both need
	to select certain number of nodes with respect to the budget constraint,
	and the influence diffusion in both problems are restricted in some way
	by the selected nodes.
However, they differ in two important aspects: 
	first, APM restricts the influence diffusion within the selected subgraph,
	while AIM only restricts diffusion from the selected seed providers
	to selected seed consumers, but from seed consumers, the diffusion can reach
	all other nodes in the social network;
	second, APM uses one budget for selecting the subgraph, while AIM
	uses two separate budgets on seed providers and seed consumers respectively.

The differences in the two problems prevent us from providing a black box
	reduction between the two problems, but their similarities allow
	us to apply the techniques from AIM hardness to APM hardness.
Moreover, by exploiting the fact that APM restricts the diffusion to 
	the selected subgraph from an arbitrary input graph, we are able to
	amplify the constant-factor hardness result of AIM 
	(Theorem \ref{thm:np-hardness}) to get an even
	stronger inapproximability result for APM:

%

\begin{theorem} \label{thm:APMhardness}
For any constant $\varepsilon > 0$,  APM over general graph $G$ is \NP-hard to 
	approximate to within factor $2^{-n^{(1-\varepsilon)}}$, where $n$ is the number of nodes in $G$. 
\end{theorem}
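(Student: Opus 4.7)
The plan is to prove Theorem \ref{thm:APMhardness} via a serial amplification of the constant-factor NP-hardness of {\sc AIM} from Theorem \ref{thm:np-hardness}. The strategy is to chain together $\ell = n^{1-\varepsilon/2}$ copies of a single-source/single-sink APM gadget derived from the AIM hard instance, so that the constant multiplicative gap per copy compounds into a $2^{-n^{1-\varepsilon}}$ gap overall.

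First, I would convert one AIM hardness instance into a single APM gadget as follows. Attach to the three-layer tripartite construction $(U, V_1, V_2)$ a designated source $s^*$ with deterministic edges to every node of $U$, and a designated sink $t^*$ with deterministic incoming edges from every node of $V_2$. By scaling the number of bottom-layer copies $\eta$ downwards so that the expected number of activated nodes in $V_2$ is $o(1)$, the event ``$t^*$ is activated'' becomes, up to a $(1 \pm o(1))$ multiplicative factor, equivalent to ``at least one node in $V_2$ is activated''. Hence the probability of activating $t^*$ inherits the AIM gap: at least some $p_{\text{yes}}$ in the completeness case and at most $p_{\text{yes}}/k$ in the soundness case.

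Second, I would serially chain $\ell$ such gadgets by identifying the sink $t^*_i$ of copy $i$ with the source $s^*_{i+1}$ of copy $i+1$. The final APM instance has $S = \{s^*_1\}$, $t = t^*_\ell$, and budget $B = \ell \cdot (b_1 + b_2)$. Since each copy uses independent fresh edge randomness, the completeness case allows activating $t$ with probability at least $p_{\text{yes}}^\ell$ (by partitioning the budget evenly across copies and solving each AIM gadget optimally), while the soundness case bounds the probability by $(p_{\text{yes}}/k)^\ell = 2^{-\Omega(\ell \log k)}$. Tuning Feige's parameter $l$ so that $\log k = \omega(1)$ and observing that the total APM instance size is $n = \mathrm{poly}(n_0) \cdot \ell$ for 3-SAT5 size $n_0$, the gap becomes $2^{-n^{1-\varepsilon}}$ after balancing the exponents.

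The main obstacle is controlling the soundness under arbitrary budget allocations across the $\ell$ copies: the adversary might over-budget some levels and under-budget others. The key point to establish is that any level allocated fewer than $b_1+b_2$ nodes still has pass-through probability at most $p_{\text{yes}}/k$, so that under-budgeted levels act as near-total bottlenecks. This requires re-examining the AIM soundness argument of Lemma \ref{lem:good-random-strings} under non-matching budget splits (including skewed allocations between providers and consumers), using a modified Markov-style argument on the ``heavy'' $(q,i)$ pairs. Combined with a pigeonhole argument over the $\ell$ levels --- the total budget $\ell(b_1+b_2)$ forces a constant fraction of copies to operate at or below the AIM budget --- this yields the desired exponential decay and completes the reduction.
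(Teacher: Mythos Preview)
Your high-level strategy---serial concatenation of constant-gap source-to-sink gadgets, amplified via a pigeonhole over the per-copy budget allocation---matches the paper exactly. However, your single-gadget construction has two real gaps that the paper handles differently.

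First, attaching deterministic edges to $t^*$ and ``scaling $\eta$ down'' does not work as stated: $\eta$ is an integer copy-count, and even $\eta=1$ gives $\approx(1-1/\e)R \gg 1$ expected activated $V_2$ nodes in completeness, so the first-order approximation you invoke is unavailable. The paper instead keeps one $V_2$-node per random string and connects each to $t$ with probability $1/\big((1-1/\e)R\big)$. Second, and more fundamentally, APM has a \emph{single} budget, whereas the AIM soundness argument (Lemma~\ref{lem:good-random-strings}) relies on the separate constraints $b_1=kQ$ on $U$ and $b_2=R$ on $V_1$; in APM the adversary is free to dump all budget into one layer. The paper therefore does not reuse the AIM instance directly: it rebalances the construction by placing $\eta=R/(kQ)$ copies of each $(q,a,i)$ in $U$ so that all three layers want budget $\approx R$, and then uses submodularity of the pass-through probability with respect to each layer separately to prove $OPT(6R,6R,R)\le 36\cdot OPT(R,R,R)$. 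This yields a \emph{bi-criteria} hardness (Lemma~\ref{lem:3-APM}): the soundness bound $\alpha$ holds even with budget $2B$. That bi-criteria form is exactly what the concatenation step needs---at most half of the $n^{1-\varepsilon}$ copies can receive more than $2B$, and on the remaining half the $\alpha$-soundness applies directly. Your proposed ``modified Markov on heavy $(q,i)$ pairs'' is part of the picture (and indeed reappears inside Lemma~\ref{lem:good-random-strings2}), but it does not by itself control the inter-layer budget skew; the layer-rebalancing plus the submodularity reduction is the missing ingredient.
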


The rest of this appendix is devoted to the proof of Theorem \ref{thm:APMhardness}.
In the next subsection we prove that in the special case of a three-layer graph
	(when disregarding the single source node $s$ and the single target
	node $t$), APM is \NP-hard to approximate to within any constant factor (Lemma \ref{lem:3-APM}). This proof is almost identical to the proof of our main hardness result for AIM (Theorem \ref{thm:np-hardness}).
Then, in Subsection \ref{exp-hardness} we concatenate $n^{1-\varepsilon}$ instances of three-layer APM to achieve exponential hardness.

\subsection{Constant factor hardness for three-layer APM}

In this subsection we prove that in the special case of a three-layer graph
	(when excluding the single seed node $s$ and the single
	target node $t$), APM is \NP-hard to approximate to within any constant factor. In fact, it will be convenient to prove the following slightly stronger bi-criteria inapproximability:

\begin{lemma}\label{lem:3-APM}
Let $\alpha > 0$ be any constant. Given a budget $B$ and a three-layer graph $G=(V,E)$ (s.t. $V = \{s\} \cup U \cup V_1 \cup V_2 \cup \{t\}$), it is \NP-hard to distinguish between the following:
\begin{description}
\item[Completeness] the associated APM instance has value at least $1/3$; and
\item[Soundness] even with budget $2B$, the associated APM instance has value at most $\alpha$.
\end{description}
\end{lemma}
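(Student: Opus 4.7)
The plan is to adapt the three-layer construction from Theorem~\ref{thm:np-hardness} by grafting a single source $s$ on top of the provider layer and a single target $t$ below the bottom layer, so that the ``count of activated consumers'' objective of AIM becomes the single activation-probability objective of APM.

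Concretely, I would keep the exact three-layer graph $U, V_1, V_2$ coming from Feige's $k$-prover system, with $U \to V_1$ edges of probability $1/k$ and $V_1 \to V_2$ edges of probability $1$. I would then add a seed $s$ with a probability-$1$ outgoing edge to every node in $U$, and a target $t$ with an incoming edge of probability $p := c/(R\eta)$ from every node in $V_2$, for a constant $c$ to be fixed later. Set $S = \{s\}$ and the APM budget $B = kQ + R$, matching the sum of the two AIM budgets. The constant $k$ in Feige's system will be taken large enough at the end to beat the given $\alpha$.

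For completeness, given a satisfying assignment, let $W$ be the union of the $kQ$ provider nodes corresponding to that assignment and the $R$ matching nodes in $V_1$. In the induced subgraph, $s$ activates all of $U$ deterministically, and the AIM completeness analysis shows that each of the $R$ selected middle nodes is independently activated with probability $1-(1-1/k)^k \approx 1-1/\e$. The new ingredient beyond the AIM proof is concentration: a standard Chernoff bound yields $|V_2^*| \ge (1-1/\e)R\eta/2$ except on an event of vanishing probability, and conditional on that event the probability that $t$ fires is at least $1 - (1-p)^{(1-1/\e)R\eta/2}$, which I tune to at least $1/3$ by picking $c$ a suitable constant. For soundness, I would re-run the AIM soundness argument with $2B$ in place of $B$: doubling the budgets only rescales the ``heavy $(q,i)$'' threshold of Lemma~\ref{lem:good-random-strings} and a handful of other constants, so $\E{|V_2^*|} = O(R\eta/k)$ still holds with a hidden constant independent of $k$. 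A one-line Markov/union bound then gives
\[
\Pr[t \text{ is activated}] \;\le\; p \cdot \E{|V_2^*|} \;=\; \frac{c}{R\eta} \cdot O(R\eta/k) \;=\; O(1/k),
\]
which drops below $\alpha$ for $k$ sufficiently large.

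The main obstacle is the completeness direction: AIM's original proof only needed the expected number of activated bottom-layer nodes, whereas the activation probability of the single target $t$ is a non-linear function of the random set $V_2^*$ and Jensen's inequality points in the wrong direction; so I must verify that the $R$-fold independence of middle-layer activations yields strong enough concentration to survive the exponential $(1-p)^{|V_2^*|}$. A secondary but essentially routine check is that the AIM soundness proof is robust under a constant-factor blow-up of the budget---this is what buys the bi-criteria relaxation in the statement (and will also be needed when the lemma is amplified to exponential hardness in the next subsection).
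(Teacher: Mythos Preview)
There is a genuine gap in the completeness direction that breaks the whole construction. In APM, diffusion is restricted to the induced subgraph on $S \cup W \cup \{t\}$: a node outside $W$ cannot be activated at all. Your proposed $W$ consists of $kQ$ nodes in $U$ and $R$ nodes in $V_1$, and \emph{no} nodes in $V_2$. Since every path from $V_1$ to $t$ passes through $V_2$, the target $t$ is never activated and the completeness value is $0$, not $\approx 1-1/\e$. You are implicitly carrying over the AIM semantics (where from the seed consumers diffusion may reach all of $V$) rather than the APM semantics.

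Patching this by also selecting $V_2$ does not work with your parameters: the AIM construction you kept has $|V_2| = R\eta$ with $\eta$ large, so including $V_2$ in $W$ blows the budget. The paper's fix is to rebalance the three layers so that each requires budget exactly $R$: the $\eta$ copies are moved from $V_2$ up to $U$ (now $\eta = R/(kQ)$ nodes per $(q,a,i)$ triple), the $U\to V_1$ edge probability is rescaled to $1/(\eta k)$, and $V_2$ shrinks to one node per random string (so $|V_2|=R$), with $V_2\to t$ edges of weight $1/\big((1-1/\e)R\big)$. The budget becomes $B=3R$, and in completeness one selects $R$ nodes in each layer, including all of $V_2$.

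This rebalancing is also what drives the soundness, which is less routine than you suggest. With a single budget that the adversary may split arbitrarily across layers, the paper first uses $|V_2|=R$ to cap that layer and then invokes submodularity of $\Pr[t \text{ activated}]$ in each layer's selection to argue $OPT \le OPT(6R,6R,R) \le \tfrac{1}{36}OPT(R,R,R)$; only after this does a ``good random strings'' analysis apply. That analysis itself (Lemma~\ref{lem:good-random-strings2}) is more delicate than Lemma~\ref{lem:good-random-strings}, because the $\eta$ copies now sitting in $U$ force one also to control heavy $(q,i,h)$ triples and good $(r,h)$ pairs via repeated Markov arguments.
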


\begin{proof}
Our proof is very similar to the proof of Theorem \ref{thm:np-hardness}.
The main difference is that for the soundness we need to rule out solutions that perform much better using additional budget.
This additional budget comes from having $2B$ budget instead of $B$; from allowing additional budget to seed the nodes in $V_2$ (in AIM those nodes are ``free''); and from transferring budgets between layers (in AIM the partitioning of budget between layers is fixed by the instance). 
In particular, to overcome the latter problem we create many copies of $U$ and set the parameters so that the optimal solution uses approximately the same fraction of the budget in each layer.
The result will follow by observing that increasing the budget on any layer by a constant factor increases the probability of acceptance by at most a constant value. 
While the proof in this section is self-contained, 
we encourage the reader to refer back to the description of Feige's $k$-prover proof system in Section \ref{sec:Three-layers}; 
in particular, $k$, $l$, $Q$, and $R$ below are parameters of the $k$-prover proof system.

\paragraph{Construction}

We let the seed set contain a single vertex $S=\{s\}$ (this is without loss of generality).
We then construct three layers: $U,V_1,V_2$.
The source node $s$ is connected to all nodes $u \in U$ with probability $1$,
and each node in $V_{2}$ is connected to the target node $t$ with probability $1/\big((1-1/\e) R\big)$.

Going back to the $k$-prover system, 
the top layer $U$ corresponds to triplets of provers' answers
to questions; the middle layer $V_{1}$ corresponds to assignments
to variables -{\em distinguished and non-distinguished}- that may
appear in the verifier's question to any of the provers; finally,
the bottom layer corresponds to the random strings of the verifier.
All the edges go from the top to the middle layer, or from the middle to the bottom layer. 

More specifically, for each triplet $\left(q,a,i\right)$ of (question,
answer, prover) we have $\eta = R / (kQ)$ corresponding nodes in $U$. 
For each pair $\left(r,\overline{a_{r}}\right)$
of (verifier's random string, assignment to all $3l$ variables) we
have a node in $V_{1}$. 
Notice that this is different from \cite{feige98},
where the elements to be covered correspond to $\left(r,a_{r},i\right)$
with $a_{r}$ being the assignment only for the distinguished variables.
For every $h \in \left[\eta\right]$, the node $\left(q,a,i,h\right)$ is connected
to all the nodes $\left(r,\overline{a_{r}}\right)$ such that: $\left(q,i\right)\in r$,
and when restricting $\overline{a_{r}}$ to the variables specified
by $\left(q,i\right)$, it is equal to $a$. In particular, for each
$i$, each $\left(r,\overline{a_{r}}\right)$ corresponds to only
one $\left(q,a,i\right)$ (and thus $\eta$ different nodes $\left(q,a,i,h\right)$). 
Finally, all the edges from $U$ to $V_{1}$ 
have probability $1/(\eta k)$.

For each random string $r$, 
we have one node in the bottom layer, $V_{2}$.
The node corresponding to each $r$ is connected to all the nodes 
$\left(r,\overline{a_{r}}\right)$ in $V_{1}$ with probability $1$. 
The role of this layer is to force any good assignment to spread 
its budget across the different random strings
(i.e. make sure that the provers answer all the questions).

Finally, we set the budget $B = 3R$.
See Table \ref{tab:Summary-of-notation2} for a summary of notation.

\begin{table}
\caption{Summary of notation for Lemma }\label{tab:Summary-of-notation2}
\vspace{0.7cm}
\begin{tabular}{|c|c|c|}
\hline 
$\left(q,a,i,h\right)$ & question, answer, prover, copy & $1$ vertex in $U$\tabularnewline
\hline 
$\left(q,a,i\right)$ & question, answer, prover & $\eta$ vertices in $U$\tabularnewline
\hline 
$\left(q,i,h\right)$ & question, prover, copy & $2^{3l/2}$ vertices in $U$\tabularnewline
\hline 
$\left(q,i\right)$ & question, prover & $2^{3l/2}\cdot\eta$ vertices in $U$\tabularnewline
\hline 
$\left(r,h\right)$ & random string, copy & %
\begin{tabular}{c}
$k\cdot2^{3l/2}$ vertices in $U$\tabularnewline
($\forall$ $\left(q,i\right)\in r$ and $a\in\left\{ 0,1\right\} ^{3l/2}$)\tabularnewline
\end{tabular}\tabularnewline
\hline 
$\left(r,\overline{a_{r}}\right)$ & random string, assignment to all $3l$ variables & $1$ vertex in $V_{1}$\tabularnewline
\hline 
$r$ & random string & $1$ vertex in $V_{2}$\tabularnewline
\hline 
\end{tabular}
\end{table}

\subsubsection{Completeness}

Given a satisfiable assignment to the 3SAT-5 formula, 
in the top layer we let $W \cap U$ be the
$\eta kQ$ nodes that correspond to the same assignment.
Because they all correspond to the same assignment, 
for each random string $r$, 
all $\eta k$ corresponding nodes in $S$ are connected
to the common node $\left(r,\overline{a_{r}}^{*}\right)$. 
In the middle layer, we let $W \cap V_1$ be the set of these $R$ nodes 
(i.e. $\left(r,\overline{a_{r}}^{*}\right)$ for $r \in R$).
Before sampling the edges, each $\left(r,\overline{a_{r}}^{*}\right)$
has $\eta k$ neighbors in $W \cap U$. After sampling, the probability
that there is a path from $W \cap U$ to $\left(r,\overline{a_{r}}^{*}\right)$
is $1-\left(1-\frac{1}{k}\right)^{k}\approx1-1/\e$.
In particular, with high probability approximately $(1-1/\e) R$ of the nodes in $W \cap V_1$ are activated (e.g. via Chernoff bound).

Finally, we let $W \cap V_2  = V_2$. With high probability, approximately $(1-1/\e) R$ of them are activated.
Thus the probability that $t$ is activated is given by 
$1-\left(1-1/\big((1-1/\e) R\big)\right)^{(1-1/\e) R}\approx1-1/\e$.

\subsubsection{Soundness}
Let $OPT$ denote the optimum value (using budget $2B=6R$ on a ``no'' instance),
and let $OPT(B_1,B_2,B_3)$ denote the optimum value among assignments that spend budget $B_i$ on the $i$-th layer.
Clearly, $OPT \leq OPT(6R,6R,6R)$ since adding nodes can only increase the value.
In fact, any solution can spend at most $R$ budget on the last layer, so $OPT \leq OPT(6R,6R,R)$.
Now, observe that if we fix $W \cap (V_1 \cup V_2)$, the probability of activating $t$ is a monotone submodular function of $W \cap U$.
Thus $OPT \leq \frac{1}{6}OPT(R,6R,R)$. Similarly, when we fix $W \cap (U \cup V_2)$, the probability of activating $t$ is a monotone submodular function of $W \cap V_1$.
Therefore, $OPT \leq \frac{1}{36}OPT(R,R,R)$. 
In particular, it suffices to show that $OPT(R,R,R)$ is bounded by an arbitrarily small constant.

In an unsatisfiable instance, any two provers agree for at most a
$\left(2^{-cl}\right)$-fraction of the random strings. 
We will show
in Lemma \ref{lem:good-random-strings2} that there are 
at most $\left(2\cdot2^{-\left(1/3\right)cl} \cdot R\right)$ 
{\em good} random strings $r$, which are strings $r$ such that there is a node $\left(r,\overline{a_{r}}\right)$
with more than $2\eta$ neighbors in $W \cap U$. 
Since for each random
string $r$ there is only one node in $V_{2}$, each of the good random
strings contributes at most one to the number of activated neighbors of $t$. 
Before sampling the edges, 
any node that does not correspond to a good random string
has at most $2 \eta$ neighbors in $W \cap U$. 
After sampling the edges between $U$ and $V_1$, the probability that any such node has a neighbor in $W \cap U$ is at most $2/k$. 
Again, each such node can contribute at most one to the number of activated neighbors of $t$. 
In total, the number of activated neighbors of $t$ is bounded by:
\[ 
\big(\text{\# of good strings}\big) + \frac{2}{k}\big(\text{\# of bad strings}\big) \leq 2\cdot2^{-\left(1/3\right)cl} \cdot R + \frac{2}{k}  R < \frac{3}{k}  R.
\]

Recall that each neighbor activates $t$ with probability $1/\big((1-1/\e) R\big)$. 
Therefore, by union bound, the probability that any of the $\frac{3}{k}  R$ activated neighbors propagates to $t$ is at most $\frac{3}{k(1-1/\e)} < 5/k$.

\end{proof}

\begin{lem}
\label{lem:good-random-strings2}There are at most $\left(2\cdot2^{-\left(1/6\right)cl}\cdot k^{2}\cdot R\right)$
good random strings.\end{lem}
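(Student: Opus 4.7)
The plan is to mirror the proof of Lemma~\ref{lem:good-random-strings}, with thresholds adjusted to reflect the $\eta$ copies per triplet and the strengthened definition that a random string $r$ is good when some $(r,\bar{a}_r)$ has more than $2\eta$ neighbors in $W \cap U$. For each question--prover pair $(q,i)$, let $n_{q,i}$ denote the total number of nodes $(q,a,i,h) \in W \cap U$, summed over $a$ and $h$. The budget $|W \cap U| \leq R = \eta kQ$ over $kQ$ pairs gives average $n_{q,i} \leq \eta$. Call $(q,i)$ heavy when $n_{q,i} > \eta \cdot 2^{(1/6)cl}$; Markov's inequality bounds the heavy fraction by $2^{-(1/6)cl}$, and union-bounding over the $k$ prover slots in a random $r$ yields
\[
\Pr_r[\exists i : (q_{r,i}, i)\text{ is heavy}] \leq k \cdot 2^{-(1/6)cl}.
\]

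For random strings $r$ whose $(q_{r,i},i)$ are all light, the key structural observation is that for $(r,\bar{a}_r)$ to exceed $2\eta$ neighbors with each prover contributing at most $\eta$ copies of its specific restricted answer, at least three provers must have that restricted answer represented in $W$, and in particular some pair of provers agrees on the restriction of $\bar{a}_r$ to their respective questions. Each light prover has at most $2^{(1/6)cl}$ effective answers per question (since the $\eta$ copies forming each full answer block consume a factor of $\eta$ of the budget, so the number of distinct answers per light $(q,i)$ is bounded by $n_{q,i}/\eta \leq 2^{(1/6)cl}$). Applying Theorem~\ref{thm:feige} to each of the $\binom{k}{2} \leq k^2$ prover pairs with a multi-answer relaxation that inflates the pair-agreement probability by $(2^{(1/6)cl})^2 = 2^{(1/3)cl}$, I obtain
\[
\Pr_r[r\text{ is good and all }(q_{r,i},i)\text{ are light}] \leq k^2 \cdot 2^{(1/3)cl} \cdot 2^{-cl} = k^2 \cdot 2^{-(2/3)cl}.
\]

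Summing the two contributions gives
\[
\Pr_r[r\text{ is good}] \leq k \cdot 2^{-(1/6)cl} + k^2 \cdot 2^{-(2/3)cl} \leq 2 k^2 \cdot 2^{-(1/6)cl},
\]
and multiplying by $R$ yields the claim. The delicate step is the bound on the effective answer count per light pair: unlike Lemma~\ref{lem:good-random-strings}, where each triplet had a unique copy so distinct answers equal total count, the adversary can here split its budget to pick many distinct answers with only one copy each. However, any split contribution adds at most $1$ to the neighbor count of $(r,\bar{a}_r)$, so exceeding $2\eta$ neighbors forces the bulk of the agreeing contributions to come from full-copy blocks (each using $\eta$ of the budget), for which the effective answer count per $(q,i)$ is indeed bounded by $n_{q,i}/\eta$ as needed. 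Making this bookkeeping rigorous---separating full-copy agreements (to which the Markov--Feige template applies cleanly) from split agreements (which cannot on their own exceed the $2\eta$ threshold)---is the main obstacle in turning this plan into a formal proof.
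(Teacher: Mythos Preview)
Your proposal has a genuine gap at exactly the step you flag as ``delicate.'' The claim that the number of effective answers per light $(q,i)$ is bounded by $n_{q,i}/\eta \le 2^{(1/6)cl}$ is only true for answers that appear as \emph{full} $\eta$-blocks, but the adversary is not restricted to full blocks. Your attempted repair---that split (partial-copy) agreements ``cannot on their own exceed the $2\eta$ threshold''---is false: for a fixed $(r,\overline{a_r})$ the neighbor count is $\sum_{i\in[k]} c_i$ where $c_i\le\eta$ is the number of copies of prover~$i$'s matching answer in $W$, and for instance $k\ge 5$ provers each with $c_i=\eta/2$ gives $\sum c_i>2\eta$ with no full blocks at all. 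So the dichotomy full-block versus split does not control the quantity you need, and the multi-answer blowup in Feige's theorem really is $(\eta\cdot 2^{(1/6)cl})^2$ rather than $(2^{(1/6)cl})^2$; since $\eta$ is polynomial in the instance size, this kills the bound.

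The paper's proof avoids this by decomposing along the copy index $h$ rather than trying to reason about answer multiplicities per $(q,i)$. After the heavy/light split on $(q,i)$'s, it introduces a second Markov step on \emph{slices}: a triple $(q,i,h)$ is heavy if its $h$-slice contains more than $2^{(1/3)cl}k$ answers. For each fixed $h$, the light slices now behave like a standard $k$-prover system with a bounded number of answers per question, so Feige's theorem bounds the fraction of ``good'' pairs $(r,h)$---those where some $(r,\overline{a_r})$ has two neighbors in slice $h$. A third Markov step then bounds the fraction of $r$'s for which many $h$'s are good. The final count is that each non-good $(r,h)$ contributes at most one neighbor, good $(r,h)$'s contribute at most $k$ each but are rare, and heavy $(q,i,h)$'s contribute a negligible amount; summing over $h$ gives less than $2\eta$. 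The per-slice decomposition is the missing idea in your approach.
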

\begin{proof}
Intuitively, any $\left(r,\overline{a_{r}}\right)$ which has $2\eta$
neighbors in $W \cap U$ corresponds to an agreement of at least two
provers - and therefore should be a rare event. In order to turn this
intuition into a proof, we must rule out solutions that distribute
the budget in an uneven manner that does not correspond to answers
of provers to verifier's questions. Fix any assignment to the ``no'' instance. 
In the next few paragraphs, we repeatedly
apply Markov's inequality to bound the number of: ``heavy $\left(q,i\right)$''
for which the assignment allocates $2^{\left(1/6\right)cl}$-times more
than the expected budget; ``heavy $\left(q,i,h\right)$'',
for which $2^{\left(1/3\right)cl}$-times
more than the expected budget is allocated; and ``good $\left(r,h\right)$''
for which two provers agree, i.e. some node $\left(r,\overline{a_{r}}\right)$
has more than one neighbor $\left(q,a,i,h\right)$ in $W \cap U$.

For any prover $i$, there are at most $\eta kQ$ corresponding nodes
in $W \cap U$, so at most $\eta k$ in expectation over $q$. By Markov's
inequality, for at most a $2^{-\left(1/6\right)cl}$-fraction of $q$'s,
more than $2^{\left(1/6\right)cl}\cdot\eta k$ nodes belong to $W \cap U$;
we call those $\left(q,i\right)$'s {\em heavy}, and {\em light}
otherwise. We henceforth focus on bounding the number of good random
strings that correspond only to light $\left(q,i\right)$'s.
\[
\Pr_{r}\left[\exists i:\mbox{ \ensuremath{\left(q,i\right)}\,\ is heavy}\right]\leq2^{-\left(1/6\right)cl}\cdot k\mbox{.}
\]

Recall that for each triplet $\left(q,a,i\right)$,
we have $\eta$ nodes in $U$ (with identical neighborhoods).
For $1\leq h\leq\eta$,
we label the $h$-th such node by $\left(q,a,i,h\right)$. 
Fix any light $\left(q,i\right)$. For each $h$, in expectation, $W \cap U$
contains at most $2^{\left(1/6\right)cl}\cdot k$ nodes $\left(q,a,i,h\right)$.
Using Markov's inequality again, for at most a $2^{-\left(1/6\right)cl}$-fraction
of the $h$'s, $W \cap U$ contains more than $2^{\left(1/3\right)cl}\cdot k$
nodes $\left(q,a,i,h\right)$. We abuse notation and call any such
triplet $\left(q,i,h\right)$ {\em heavy}, and {\em light} otherwise.
In particular, for any $r$ such that all the corresponding $\left(q,i\right)$'s
are light, at most a $2^{-\left(1/6\right)cl}$-fraction of the corresponding
$\left(q,i,h\right)$'s are heavy. For each heavy $\left(q,i,h\right)$,
any $\left(r,\overline{a_{r}}\right)$ has only one neighbor $\left(q,a,i,h\right)$.
Thus to each $\left(r,\overline{a_{r}}\right)$, all the heavy $\left(q,i,h\right)$'s
together contribute at most $2^{-\left(1/6\right)cl}k \cdot \eta$
neighbors in $W \cap U$. We henceforth ignore the heavy $\left(q,i,h\right)$'s,
and add these $2^{-\left(1/6\right)cl}k \cdot \eta$ nodes at the end.
\[
\forall \left(r,\overline{a_{r}}\right) \;\;\; \#\Big\{\left(q,a,i,h\right) \left(q,a,i,h\right)\in\mathcal{N}\left(r,\overline{a_{r}}\right)\cap (W \cap U)\mbox{ and $\left(q,i,h\right)$ is heavy}\Big\}\leq2^{-\left(1/6\right)cl}k \cdot \eta.
\]

Consider only light $\left(q,i,h\right)$'s. Then for each $h$ and
light $\left(q,i\right)$, there are at most $2^{\left(1/3\right)cl}\cdot k$
nodes $\left(q,a,i,h\right)$ in $W \cap U$. In other words, for each
$h$, each prover has at most $2^{\left(1/3\right)cl}\cdot k$ answers
to each question. Since we started from an unsatisfiable instance,
we have that for any pair of provers, at most a $2^{-cl}\cdot\left(2^{\left(1/3\right)cl}\cdot k\right)^{2}$-fraction
of random strings have at least one pair of agreeing answers (Theorem \ref{thm:feige}). Keeping
$h$ fixed and summing over all pairs of provers, this corresponds
to a $\left(2^{-\left(1/3\right)cl}\cdot k^{4}\right)$-fraction of
random strings $r$ such that any node $\left(r,\overline{a_{r}}\right)$
has more than one neighbor $\left(q,a,i,h\right)$ in $W \cap U$. We
say that a pair $\left(r,h\right)$ is {\em good} if for some $\overline{a_{r}}$,
the node $\left(r,\overline{a_{r}}\right)$ has more than one neighbor
$\left(q,a,i,h\right)$ in $W \cap U$. 
\[
\Pr_{r,h}\left[\mbox{\ensuremath{\left(r,h\right)}\,\ is good}\right]\leq2^{-\left(1/3\right)cl}\cdot k^{4}\mbox{.}
\]

Finally, for each random string $r$, in expectation, at most a $2^{-\left(1/3\right)cl}\cdot k^{4}$-fraction
of the $h$'s satisfy $\left(r,h\right)$ is good. Applying Markov's
inequality one more time, we have that for at most a $2^{-\left(1/6\right)cl}$-fraction
of the $r$'s, for more than a $2^{-\left(1/6\right)cl}\cdot k^{4}$-fraction
of the $h$'s, $\left(r,h\right)$ is good. We claim that these $r$'s,
together with the ones that correspond to heavy $\left(q,i\right)$'s,
are the only good random strings. Notice that there are at most $2\cdot2^{-\left(1/6\right)cl} k \cdot R$
of them.

It is left to prove that if $\left(r,h\right)$ is good for at most
a $2^{-\left(1/6\right)cl}\cdot\eta k^{4}$ of the $h$'s, then $r$
cannot be a good random string. For each $\left(r,\overline{a_{r}}\right)$,
each good $\left(r,h\right)$ contributes at most $k$ neighbors in
$W \cap U$. Together with additional $2^{-\left(1/6\right)cl}\eta k$
neighbors due to heavy $\left(q,i,h\right)$'s and a single neighbor
for each other $h$, we have that the number of neighbors of $\left(r,\overline{a_{r}}\right)$
in $W \cap U$ is at most 
\[
\left(1+2^{-\left(1/6\right)cl}\cdot k^{5}+2^{-\left(1/6\right)cl}k\right)\eta<2\eta.
\]
\end{proof}

\subsection{Exponential factor hardness}\label{exp-hardness}

We are now ready to complete the proof of Theorem \ref{thm:APMhardness}.
We concatenate $n^{1-\varepsilon}$ copies of the hard 3-layer APM instance, each of size $n^{\varepsilon}$.
(So that the total number of nodes is $n^{1-\varepsilon} \cdot n^{\varepsilon} = n$, 
and the blowup in size is polynomial in $n^{\varepsilon}$, for any constant $\epsilon$.)
Specifically, by concatenation we mean that we identify $t_i$, the target node of the $i$-th copy,
with $s_{i+1}$, the source node of the $i+1$-th copy.
The total budget is set to $n^{1-\varepsilon}(3R+1)$.

\begin{description}
\item[Completeness] If we can achieve value $1/2$ on each copy, the final activation probability of $t_{n^{1-\varepsilon}}$ is $2^{-n^{1-\varepsilon}}$.
\item[Soundness] We can allocate budget greater than $6R$ to at most half the instances. 
On the other half of the instances we would achieve value at most $\alpha$, 
where $\alpha$ is an arbitrarily small constant which depends on our instantiation of the 3-layer APM (in particular, $\alpha = 1/16$ suffices).
Therefore, the final activation probability is at most $\alpha^{n^{1-\varepsilon}/2} \leq 2^{-n^{1-\varepsilon}} \cdot 2^{-n^{1-\varepsilon}}$.
\end{description}
\qed

\paragraph{Remark}
One can easily generalize the APM problem to support a target set $T$
	of nodes with the goal of maximizing the expected number of active nodes in the intersection of the target set and the selected set, defined as APM-m problem
	below.
\begin{definition}\label{prob:general}[APM-m]
Given a graph $G=\left(V,E\right)$ with independent probabilities $p_{e}$
	on the edges, seed set $S\subseteq V$, target set 
	$T \subseteq V \setminus S$, and a budget $B$. 
The problem of APM-m is to find a subset $W\subseteq V$ of size 
	$\left|W\right|\leq B$.
Let $G(S\cup W)$ be the subgraph of $G$ induced by
	nodes in $S\cup W$, and suppose that
	influence diffusion in $G(S\cup W)$ follows the independent cascade model
	with edge probabilities $p_e$ for every edge $e$ in the subgraph $G(S\cup W)$.
The goal of APM-m is to maximize the expected number of active nodes
	in $T\cap W$ 
	when influence diffusion is from the seed set $S$ 
	and is restricted to the subgraph $G(S\cup W)$.
\end{definition}
Since APM-m is a generalization of APM with a single target, the
	near-exponential hardness of APM directly applies to this generalization.
We further remark that the proof of 
	the constant factor hardness of APM for three-layer graphs can be
	adapted to show that the constant factor hardness of APM-m
	for three-layer graphs (with one additional node as the single seed
	connecting to all first layer nodes with edge probability $1$, and the third-layer nodes as the targets).

%

%

\fi

\end{document}